\documentclass{article}
\usepackage{spconf,amsmath,amssymb,graphicx,hyperref}
\usepackage{amsthm}

\usepackage{stmaryrd}
\usepackage{cite}
\usepackage{caption}
\newtheorem{thm}{\protect\theoremname}
\newtheorem{prop}[thm]{\protect\propositionname}

\providecommand{\examplename}{Example}
\providecommand{\propositionname}{Proposition}
\providecommand{\theoremname}{Theorem}

\title{Set-membership localization of intermittent RF sources\\
using a fleet of collaborating UAVs}
\name{Jacques Bois, Raul de Lacerda, Michel Kieffer}
\address{Université Paris-Saclay, CNRS, CentraleSupélec, \\Laboratoire des Signaux et Systèmes, 91192, Gif-sur-Yvette, France.}
\begin{document}
\ninept
\maketitle
\begin{abstract}
This paper proposes a set-membership approach (SMA) to localize radio frequency (RF) sources observed by a collaborating fleet of Unmanned Aerial Vehicles (UAVs). Considering frequency-separable RF transmitters with intermittent and periodic emission patterns, 
the SMA evaluates set estimates of the source locations and a set free of sources. Simulation results show that SMA outperforms a Bayesian baseline approach in terms of localization accuracy and convergence speed.
\end{abstract}
\begin{keywords}
Distributed estimation, Intermittent sources, RF source localization, Set-membership estimation, UAVs
\end{keywords}
\section{Introduction}

\label{sec:Introduction}

Localizing radio frequency (RF) sources with periodic emissions using a fleet of collaborating Unmanned Aerial Vehicles (UAVs) is critical in many applications, including post-disaster search and rescue~\cite{Yanmaz2023}, border and critical infrastructure surveillance~\cite{Jayachandran2022}, electromagnetic intelligence~\cite{Bisio2021}, and IoT asset monitoring~\cite{Delafontaine2020}. Sources with periodic transmissions are commonly encountered in civil aviation emergency beacons~\cite{RTCA2020}, cellular synchronization channels~\cite{Zhao2002}, GNSS repeaters~\cite{Abraha2016}, and dedicated tracking systems~\cite{Wang2023a}.

Signal detection is a prior to RF source localization. While matched filtering~\cite{Krim1996} or time-frequency approaches~\cite{Nair2022} perform properly when waveform signatures are well known, their performance degrades significantly under unknown parameter fluctuations~\cite{Tian2020,Rojas2026,Gutu2026}. More fundamentally, emission intermittency complicates signal association and source separation~\cite{Pack2009,Koohifar2018,Testi2022}. Standard probabilistic formulations either assign empirical emission probabilities and address localization via maximum likelihood schemes~\cite{Oispuu2010}, or introduce fixed missed-detection rates~\cite{Selvaratnam2017}. 

RF localization classically exploits Time Difference of Arrival ~\cite{Khalil2023}, Received Signal Strength Indicator (RSSI)~\cite{Ketabalian2024,Zhou2024}, Frequency Difference of Arrival ~\cite{Pei2023}, or Angle of Arrival ~\cite{Vrba2019,Cheng2020}. Measurement fusion via Bayesian frameworks~\cite{Koohifar2018,Jiang2021} or Probability Hypothesis Density filters~\cite{Dames2017} allows source localization and tracking under clutter. However, when deployed to localize intermittent sources, probabilistic approaches struggle to 
certify the absence of sources within an observed region. Absence of detection during an observation window may indicate either an empty zone or an emitter remaining in a silent period. State-machine heuristics~\cite{Pack2009} and Bernoulli-based intermittency tracking~\cite{Koohifar2018} provide confidence probabilities, yet they fail to deliver deterministic guarantees of clearance.

Deploying mobile UAV fleets mitigates spatial occlusion and enhances coverage through collaborative sensing. Multi-agent coordination requires robust information fusion and path planning, typically realized via distributed particle or Kalman filtering~\cite{Ma2020}, rule-based collision-avoiding flocking~\cite{Spyridis2021}, Fisher information and Cram\'er-Rao Lower Bound optimization~\cite{Shahidian2016,Koohifar2017,Golihaghighi2022}, distributed multi-agent optimization~\cite{Zhang2024}, or reinforcement learning~\cite{Tang2025}. 

As a robust alternative to stochastic filtering, set-membership approaches (SMA)~\cite{Milanese2004} are able to characterize sets 
guaranteed to contain all value of the uncertain parameters consistent with the measurement model and noise bounds. SMA have been applied for mobile network positioning~\cite{Calafiore2026}, RSSI filtering~\cite{Yang2022,Liu2026}, and inter-UAV cooperative localization~\cite{Ding2026}. While vision-based UAV setups have applied SMA to occluded sources~\cite{Ibenthal2021}, apparent intermittency in these contexts arises solely from line-of-sight visual masking ~\cite{Zagar2025} rather than genuine temporal RF emission characteristics.

To address these limitations, this paper introduces a distributed SMA for localizing static and frequency-separable RF sources with intermittent but periodic emissions using a fleet of collaborating UAVs. The remainder of this paper is structured as follows. Section~\ref{sec:Problem} details the system model and problem formulation. Section~\ref{sec:Distributed-estimation} details the iterative SMA for source localization and absence of source certification. Section~\ref{sec:Simu-results} compares the performance of with proposed SMA with a Bayesian baseline approach.

\section{Hypotheses and problem formulation}
\label{sec:Problem}



Consider a unknown number $N^{\text{s}}$ of static sources emitting RF signals and situated in a bounded $2$-D area $\mathbb{X}_{0}\subset\mathbb{R}^3$ to which a frame $\mathcal{F}$ is attached. The source locations are $\boldsymbol{x}^{\mathrm{t}}_{j}$,
$j\in\mathcal{N}^{\mathrm{t}}=\llbracket1,N^{\text{t}}\rrbracket$.
Source~$j$ emits periodically RF signals of duration $\delta_{j}$, with period $\tau_{j}\in[\delta_{j},\tau_{\max}]$, where $\tau_{\max}$
is the maximum emission period, and emission offset 
$t_{0,j}$, defined with respect to a global temporal reference. The emission duration, period, and offset are not known. The signals emitted by two sources are considered as separable.

A fleet of $N^{\text{u}}$ UAVs explores $\mathbb{X}_{0}$ to detect the sources and estimate their location. Time is slotted in intervals of duration $T$. 
The state of UAV~$i$ at time $kT$ is $\boldsymbol{x}^{\mathrm{u}}_{i}(k)$
and consists of the location and orientation of the origin of $\mathcal{F}_{i}$, the frame attached to UAV~$i$
in $\mathcal{F}$. We assume that each UAV knows its own state. 

Each UAV $i\in\mathcal{N}^{\mathrm{u}}=\llbracket1,N^{\text{u}}\rrbracket$
is equipped with an antenna and signal processing system. As in \cite{Mechitov2003}, we assume that during the time interval $[(k-d)T,kT]$, UAV~$i$ is only able to detect the presence of a source in its field
of detection (FoD) $\mathbb{F}_{i}\left(k\right)$, which shape and location depends on $\boldsymbol{x}^{\mathrm{u}}_{i}(k)$.
If a source~$j$
emits at time $t$ such that $[t,t+\delta_{j}]\cap[(k-1)T,kT]\neq\emptyset$
and $\boldsymbol{x}^{\mathrm{t}}_{j}\in\mathbb{F}_{i}\left(k\right)$,
then source~$j$ is always detected by UAV~$i$. 

The internal clocks of all UAVs of the fleet are assumed to be perfectly
synchronized. 
The communication latency and errors are neglected, while the communication range
$r_\mathrm{com}$ is assumed to be limited: the list of one-hop neighbors of UAV~$i$ during the time interval $[kT,(k+1)T]$ is
denoted $\mathcal{N}_{i}(k)$.

\subsection{Set estimates and problem formulation}

\label{subsec:Estimates}

The set $\mathbb{I}_{i}(k)$ gathers the information available to UAV~$i$ up to time $kT$, including $\boldsymbol{x}^{\mathrm{u}}_{i}$$\left(k\right)$, the list
$\mathcal{K}_{i}(k)$ of indexes of already known sources (detected
up to $kT$), all its measurements, and similar information obtained from its neighbors
via communications.

As will be detailed in Section~\ref{sec:Distributed-estimation}, during the time interval $[kT,(k+1)T]$, using $\mathbb{I}_{i}(k)$ and information obtained from its neighbours, UAV~$i$ evaluates a list $\mathcal{X}_{i}(k)=\left\{ \mathbb{X}_{i,j}(k)\right\} _{j\in\mathcal{K}_{i}(k)}$
of set estimates $\mathbb{X}_{i,j}(k)$ of $\boldsymbol{x}^{\mathrm{t}}_{j}$ for all sources with index in $\mathcal{K}_{i}(k)$ and
a set $\overline{\mathbb{X}}_{i}(k)$ containing all locations in $\mathbb{X}_{0}$ where UAV~$i$ can state that there is no source.
Then UAV~$i$ evaluates the source localization uncertainty
\begin{equation}
\Phi^{\mathrm{u}}_{i}(k)=\phi\left(\textstyle\bigcup_{j\in\mathcal{K}_{i}(k)}\mathbb{X}_{i,j}(k)\right),\label{eq:presence-uncertainty}
\end{equation}
where $\phi(\mathbb{X})$ is a measure of the set $\mathbb{X}$. UAV~$i$ also evaluates the size of the set potentially containing sources  
\begin{equation}
\Phi^{\mathrm{a}}_{i}(k)=\phi\left(\mathbb{X}_{0}/\overline{\mathbb{X}}_{i}(k)\right).\label{eq:absence-metric}
\end{equation}

The objective of the fleet of UAVs is to minimize
\begin{equation}
\Phi\left(k\right)=\sideset{\frac{1}{N^{\text{u}}}}{_{i=1}^{N^{\text{u}}}}\sum\left(\Phi^{\mathrm{u}}_{i}(k)+\lambda\Phi^{\mathrm{a}}_{i}(k)\right),\label{eq:uncertainty-estimate}
\end{equation}
where $\lambda\geqslant 0$ adjusts the trade-off between the source localization uncertainty and the uncertainty about the absence of sources.


\section{Iterative Set-Membership Estimation}

\label{sec:Distributed-estimation}


Assuming  that a given UAV~$i$, immediately before time $kT$,  has access only to $\mathcal{K}_{i}(k-1)$, $\mathcal{X}_{i}(k-1)=\left\{ \mathbb{X}_{i,j}(k-1)\right\} _{j\in\mathcal{K}_{i}(k-1)}$,
and $\overline{\mathbb{X}}_{i}(k-1)$, with these sets initialized
as $\mathcal{K}_{i}(0)=\emptyset$, $\mathcal{X}_{i}(0)=\emptyset$,
and $\overline{\mathbb{X}}_{i}(0)=\emptyset$, the list $\mathcal{K}_{i}(k)$ is updated at time $kT$ based on the list $\mathcal{D}_{i}(k)$, which contains the indexes of the sources detected by UAV~$i$ during the time interval $\left[\left(k-1\right)T,kT\right]$




\begin{equation}
\mathcal{K}_{i}(k)=\mathcal{K}_{i}(k-1)\cup\mathcal{D}_{i}(k).\label{eq:Updating list of detected sources}
\end{equation}
For a newly detected source, 
\emph{i.e.}, such that $j\in\mathcal{D}_{i}(k)$ and $j\notin\mathcal{K}_{i}(k-1)$,
the set estimate of the location $\boldsymbol{x}_j$ is evaluated as 
\begin{equation}
\mathbb{X}_{i,j}(k)=\mathbb{F}_{i}(k)\setminus\overline{\mathbb{X}}_{i}(k-1),\label{eq:Updating solution sets if new source}
\end{equation}
as $\overline{\mathbb{X}}_{i}(k-1)$ is free of sources. For a previously
detected source $j$ that is detected again, \emph{i.e.}, such that $j\in\mathcal{D}_{i}(k)\cap\mathcal{K}_{i}(k-1)$, 
\begin{equation}
\mathbb{X}_{i,j}(k)=\left(\mathbb{X}_{i,j}(k-1)\cap\mathbb{F}_{i}(k)\right)\setminus\overline{\mathbb{X}}_{i}(k-1),\label{eq:Updating solution sets}
\end{equation}
since the location of such source is known to belong to $\mathbb{X}_{i,j}(k-1)$,
to $\mathbb{F}_{i}(k)$, and not to $\overline{\mathbb{X}}_{i}(k-1)$.
If source $j$ is not detected again, \emph{i.e.}, $j\notin\mathcal{D}_{i}(k)$ and $j\in \mathcal{K}_{i}(k-1)$, \eqref{eq:Updating solution sets} boils down
to
\begin{equation}
\mathbb{X}_{i,j}(k)=\mathbb{X}_{i,j}(k-1)\setminus\overline{\mathbb{X}}_{i}(k-1).\label{eq:Updating solution sets_v2}
\end{equation}


\subsection{Proving the absence of a source}

\label{subsec:Proving-the-absence}

The intermittent emission of sources makes it difficult to be certain
of the absence of sources in subsets of $\mathbb{X}_{0}$.
A basic solution is to keep UAV~$i$ static
for a duration larger than $\tau_{\max}$. If no emission is perceived,
the corresponding FoD can
be considered free of sources, since the emission period is less than
$\tau_{\max}$ and RF emissions are always detected from sources in the FoD of UAVs. 
This strategy prioritizes the reduction of $\Phi^{\mathrm{a}}_{i}(k)$
in \eqref{eq:uncertainty-estimate} 
over a reduction of $\Phi^{\mathrm{u}}_{i}(k)$ and is not necessarily the best to quickly detect
sources so that UAVs are likely to move continuously. Consequently, some $\boldsymbol{x}\in\mathbb{X}_{0}$
may be observed by various UAVs during non-consecutive time intervals, usually
of duration less than $\tau_{\max}$. Even if the cumulated observation
duration is larger than $\tau_{\max}$, a source may be missed.

Our aim in what follows is to determine under which condition related
to the observation time intervals of a given subset of $\mathbb{X}_{0}$,
this subset may be considered free of sources.

\subsubsection{Time intervals without detection and source detectability}

\label{Ssec:ObservationTimeInt}

Consider a location $\boldsymbol{x}\in\mathbb{X}_{0}$ and $\mathcal{O}_i\left(\boldsymbol{x},k\right)=\left\{ \left[t_{q}\right]\right\} _{q\in\left\llbracket 1,N_{\mathrm{o},i}\left(\boldsymbol{x},k\right)\right\rrbracket }$, the list 
of $N_{\text{o},i}\left(\boldsymbol{x},k\right)$ observation time intervals
during which no emission from $\boldsymbol{x}$ has been detected by UAV~$i$.
$\mathcal{O}_i\left(\boldsymbol{x},k\right)$ is evaluated iteratively as
\begin{equation}
\mathcal{O}_i\left(\boldsymbol{x},k\right)=\mathcal{O}_i\left(\boldsymbol{x},k-1\right)\cup\left\{ \left[\left(k-1\right)T,kT\right]\right\} \label{eq:observation-intervals-list-update}
\end{equation}
for all 
$\boldsymbol{x}\in\mathbb{F}_{i}\left(k\right)\setminus\textstyle\bigcup_{j\in\mathcal{D}_{i}\left(k\right)}\mathbb{X}_{i,j}(k),\label{eq:observation-list-update-condition}$
and 
$\mathcal{O}_i\left(\boldsymbol{x},k\right)=\mathcal{O}_i\left(\boldsymbol{x},k-1\right)\label{eq:observation-list-update-if-detection}$
else. 
Our aim in what follows is to determine using $\mathcal{O}_i\left(\boldsymbol{x},k\right)$
whether there is no source at $\boldsymbol{x}$.


A source $j$ located at $\boldsymbol{x}$ with emission period $\tau_{j}\in[\tau_{\min},\tau_{\max}]$
is \textit{detectable} by UAV~$i$ using $\mathcal{O}_i\left(\boldsymbol{x},k\right)$
if, whatever the initial emission offset $t_{0}\in\left[0,\tau_{j}\right]$,
at least one of the periodic emissions would occur during at least one of the observation time
intervals, \emph{i.e.}, if 
\begin{equation}
\forall t_{0}\in\left[0,\tau_{j}\right[,\exists\ell\in\mathbb{N},\exists\left[t\right]\in\mathcal{O}_i\left(\boldsymbol{x},k\right)\text{ s. t. }t_{0}+\ell\tau_{j}\in\left[t\right],\label{eq:detectable}
\end{equation}
where $[t]=[\underline{t},\overline{t}]$ represents an interval with lower bound $\underline{t}$ and upper bound $\overline{t}$.
\begin{prop}
\label{prop:NoSource}Consider $\boldsymbol{x}\in\mathbb{X}_0$ and $\mathcal{O}_i\left(\boldsymbol{x},k\right)$.
If for all $\tau\in\left[\tau_{\min},\tau_{\max}\right]$, \eqref{eq:detectable}
is satisfied, then no source can be located in $\boldsymbol{x}$.
\end{prop}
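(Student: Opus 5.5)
The plan is a proof by contradiction. Assume some source $j$ is located at $\boldsymbol{x}$, with unknown period $\tau_j\in[\tau_{\min},\tau_{\max}]$, duration $\delta_j$ and offset $t_{0,j}$. We then show that UAV~$i$ would necessarily have detected it during one of the intervals in $\mathcal{O}_i(\boldsymbol{x},k)$. This contradicts the fact that, by construction via \eqref{eq:observation-intervals-list-update}, no emission from $\boldsymbol{x}$ was detected during any of these intervals.

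The first step is to reduce the emission offset. Source $j$ starts emissions at the instants $t_{0,j}+m\tau_j$, $m\in\mathbb{Z}$, but only instants at or after the beginning of observation matter. Writing $t_0 = t_{0,j} \bmod \tau_j \in[0,\tau_j[$, the set $\{t_{0,j}+m\tau_j\}_{m\in\mathbb{Z}}\cap[0,\infty[$ coincides with $\{t_0+\ell\tau_j\}_{\ell\in\mathbb{N}}$. Here we take the global temporal reference so that observations start at $t=0$; the intervals $[(k-1)T,kT]$ are all nonnegative. The second step is to apply the hypothesis at $\tau=\tau_j$. Since $\tau_j\in[\tau_{\min},\tau_{\max}]$, \eqref{eq:detectable} holds for $\tau_j$, and in particular for this $t_0$. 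So there exist $\ell\in\mathbb{N}$ and $[t]\in\mathcal{O}_i(\boldsymbol{x},k)$ with $t_0+\ell\tau_j\in[t]$. Consequently the emission starting at $t=t_0+\ell\tau_j$ satisfies $[t,t+\delta_j]\cap[t]\neq\emptyset$.

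The third step is to invoke the detection model. Each $[t]\in\mathcal{O}_i(\boldsymbol{x},k)$ is an interval $[(k'-1)T,k'T]$ that was added only when $\boldsymbol{x}\in\mathbb{F}_i(k')$. Since $\boldsymbol{x}^{\mathrm{t}}_j=\boldsymbol{x}\in\mathbb{F}_i(k')$ and the emission overlaps $[(k'-1)T,k'T]$, the hypothesis of Section~\ref{sec:Problem} states that source $j$ is always detected by UAV~$i$, so $j\in\mathcal{D}_i(k')$. The set update \eqref{eq:Updating solution sets if new source}--\eqref{eq:Updating solution sets} then gives $\boldsymbol{x}\in\mathbb{X}_{i,j}(k')$. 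The argument is that $\boldsymbol{x}\in\mathbb{F}_i(k')$, and $\boldsymbol{x}\notin\overline{\mathbb{X}}_i(k'-1)$ because that set is certified source-free by induction, while $\boldsymbol{x}$ lies in the previous estimate since estimates are guaranteed to contain the true location. Hence $\boldsymbol{x}\in\bigcup_{j\in\mathcal{D}_i(k')}\mathbb{X}_{i,j}(k')$, and the update rule would \emph{not} have added $[(k'-1)T,k'T]$ to $\mathcal{O}_i(\boldsymbol{x},\cdot)$. This is a contradiction, so no source is at $\boldsymbol{x}$.

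I expect the main obstacle to be the bookkeeping in the last step. One must argue that the set estimates $\mathbb{X}_{i,j}$ remain guaranteed outer approximations, i.e. that $\boldsymbol{x}^{\mathrm{t}}_j\in\mathbb{X}_{i,j}(k')$ whenever $j$ is known. This needs a simultaneous induction on $k'$ together with the claim that $\overline{\mathbb{X}}_i(k'-1)$ contains no source, with the present proposition itself justifying additions to $\overline{\mathbb{X}}_i$. I would structure it as a joint induction, using the proposition at earlier times as the inductive hypothesis for the absence set. Minor care is also needed for closed versus half-open intervals, and for the boundary case $t_0+\ell\tau_j$ exactly at an interval endpoint, which the closed intervals $[t]$ and the overlap condition handle.
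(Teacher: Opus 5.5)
Your first two steps, together with the appeal to the ``always detected'' hypothesis, are the paper's proof. You apply \eqref{eq:detectable} at the true period $\tau_j$ and offset $t_0$, which gives an emission instant $t_0+\ell\tau_j\in[t]$ for some $[t]\in\mathcal{O}_i(\boldsymbol{x},k)$, and hence a guaranteed detection during $[t]$. The paper stops there. It \emph{defines} $\mathcal{O}_i(\boldsymbol{x},k)$ as the list of observation intervals during which no emission from $\boldsymbol{x}$ was detected, so the contradiction is immediate. Your reduction of the offset modulo $\tau_j$ is a harmless clarification that the paper leaves implicit.

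The problem is your third step. There you derive the contradiction from the update rule \eqref{eq:observation-intervals-list-update}, which requires $\boldsymbol{x}\in\mathbb{X}_{i,j}(k')$, and the joint induction you propose for this does not close.
\begin{itemize}
\item $\overline{\mathbb{X}}_i$ is enlarged through \eqref{eq:Update absence estimate}, which checks the covering criterion only for $\tau\in\mathcal{T}_{\mathrm{d}}$. Via Prop.~\ref{prop:ModDiscret}, this yields \eqref{eq:detectable} only for those discrete periods, not for the whole interval $[\tau_{\min},\tau_{\max}]$ that the present proposition assumes.
\item So ``the present proposition justifies additions to $\overline{\mathbb{X}}_i$'' is false. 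For a source whose period lies outside $\mathcal{T}_{\mathrm{d}}$, the inductive hypothesis that $\overline{\mathbb{X}}_i(k'-1)$ is source-free can fail. For example, take $\mathcal{T}_{\mathrm{d}}=\{2\}$, windows $[0,1]$ and $[1,2]$, and a source with period $3$, offset $2.5$ and short duration: the point gets certified even though a source sits there.
\item Once that happens, \eqref{eq:Updating solution sets if new source}--\eqref{eq:Updating solution sets} remove $\boldsymbol{x}$ from $\mathbb{X}_{i,j}(k')$. The update rule can then append $[(k'-1)T,k'T]$ to $\mathcal{O}_i(\boldsymbol{x},\cdot)$ despite the detection. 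Under your operational reading of $\mathcal{O}_i$, the statement is therefore only as sound as the whole algorithm.
\item The neighbour fusion unions the $\overline{\mathbb{X}}_\ell(k)$ and intersects the $\mathbb{X}_{\ell,j}(k)$, so any such induction would also have to run over the whole fleet, not UAV~$i$ alone.
\end{itemize}
Drop this bookkeeping and close the argument with the defining property of $\mathcal{O}_i(\boldsymbol{x},k)$, as the paper does.
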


\begin{proof}
Assume that a source with emission
period $\tau_{0}$ and offset $t_{0}$ is present at $\boldsymbol{x}$. Since for all $\tau\in\left[\tau_{\min},\tau_{\max}\right]$,
\eqref{eq:detectable} is satisfied, for the period $\tau_{0}$
and offset $t_{0}$, there exists $\ell\in\mathbb{N}$ and $\left[t\right]\in\mathcal{O}_i\left(\boldsymbol{x},k\right)$
such that $t_{0}+\ell\tau_{0}\in\left[t\right]$. Thus, the
source at $\boldsymbol{x}$ has emitted during the time interval
$\left[t\right]$. We assumed that all emissions are detected,
which leads to a contradiction. Consequently, no source is present
at $\boldsymbol{x}$.
\end{proof}

\subsubsection{Proving the absence of a source at a point of $\mathbb{X}_0$}

\label{subsec:Criterion-to-prove}

Proving the absence of sources at some $\boldsymbol{x}$
requires determining whether \eqref{eq:detectable} is satisfied for
all $\tau\in\left[\tau_{\min},\tau_{\max}\right]$. In what follows, we relax this criterion by considering a discrete set of periods $\mathcal{T}_{\mathrm{d}}\subset\left[\tau_{\min},\tau_{\max}\right]$.
This result is extended to a continuous set of emission periods in Appendix \ref{subsec:Extending-criterion-verification}.

Prop.~\ref{prop:ModDiscret} introduces a criterion to verify
whether \eqref{eq:detectable} is satisfied for a given $\tau\in\mathcal{T}_{\mathrm{d}}$.
It involves the modulo of an interval defined as
\begin{equation}
\left[t\right]\mathrm{mod}\tau=\left\{ t\mod{\tau}\mid t\in\left[t\right]\right\} .\label{eq:Interval_Modulo_Def}
\end{equation}
If the width $w\left(\left[t\right]\right)$ of the interval $\left[t\right]$ is such that $w\left(\left[t\right]\right)\geqslant\tau$, \eqref{eq:Interval_Modulo_Def}
is evaluated as $\left[t\right]\mathrm{mod}\tau=\left[0,\tau\right[$
and if $w\left(\left[t\right]\right)<\tau$,
as
\begin{equation}
\left[t\right]\text{mod}\tau=\begin{cases}
\left[\underline{t}\text{mod}\tau,\underline{t}\text{mod}\tau+\overline{t}-\underline{t}\right[ & \text{if }\underline{t}\text{mod}\tau\leqslant\overline{t}\text{mod}\tau,\\
\left[0,\overline{t}\text{mod}\tau\right[\cup\left[\underline{t}\text{mod}\tau,\tau\right[ & \text{else.}
\end{cases}\label{eq:IntervalModulo2}
\end{equation}


\begin{prop}
\label{prop:ModDiscret}Consider $\boldsymbol{x}\in\mathbb{X}_0$, $\mathcal{O}_i\left(\boldsymbol{x},k\right)$, and $\tau\in\mathcal{T}_{\mathrm{d}}$. If
\begin{equation}
\left[0,\tau\right[=\textstyle\bigcup_{\left[t\right]\in\mathcal{O}_i\left(\boldsymbol{x},k\right)}\left[t\right]\mathrm{mod}\tau\label{eq:absence-criterion-verification}
\end{equation}
then \eqref{eq:detectable} is satisfied.
\end{prop}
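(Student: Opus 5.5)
The plan is a direct argument: take an arbitrary offset $t_{0}\in\left[0,\tau\right[$ and build an index $\ell$ and an interval $[t]\in\mathcal{O}_i\left(\boldsymbol{x},k\right)$ with $t_{0}+\ell\tau\in[t]$, exactly as required by \eqref{eq:detectable}. Assumption \eqref{eq:absence-criterion-verification} says that $t_0$ belongs to $\bigcup_{[t]\in\mathcal{O}_i(\boldsymbol{x},k)}[t]\,\mathrm{mod}\,\tau$. So there is some $[t]\in\mathcal{O}_i\left(\boldsymbol{x},k\right)$ with $t_0\in[t]\,\mathrm{mod}\,\tau$. By definition \eqref{eq:Interval_Modulo_Def}, this gives some $s\in[t]$ with $s\bmod\tau=t_0$.

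Next I write $s=t_0+m\tau$ with $m=\lfloor s/\tau\rfloor\in\mathbb{Z}$. Setting $\ell=m$, we get $t_0+\ell\tau=s\in[t]$. The observation intervals lie in $[0,kT]$ because they are of the form $[(k'-1)T,k'T]$ with $k'\geqslant1$. Hence $s\geqslant0$, and since $t_0<\tau$, this gives $m\geqslant0$, that is $\ell\in\mathbb{N}$. Because $t_0$ was arbitrary, \eqref{eq:detectable} holds.

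The one point I expect to need care is the consistency between the abstract definition \eqref{eq:Interval_Modulo_Def} and the explicit formula \eqref{eq:IntervalModulo2} that is actually used to check \eqref{eq:absence-criterion-verification}. If the hypothesis is checked with \eqref{eq:IntervalModulo2}, I first have to show that the right-hand side of \eqref{eq:IntervalModulo2} is contained in $\{t\bmod\tau\mid t\in[t]\}$. The case $w([t])\geqslant\tau$ is immediate, since then every residue is attained. When $w([t])<\tau$, I split into the two branches. In the non-wrapping branch, $u=\underline{t}\bmod\tau+h$ with $0\leqslant h<\overline{t}-\underline{t}$ is the residue of $\underline{t}+h\in[t]$. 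In the wrapping branch, residues in $[\underline{t}\bmod\tau,\tau[$ come from $\underline{t}+h$ before the wrap point. Residues in $[0,\overline{t}\bmod\tau[$ come from points after the next multiple of $\tau$, which lies inside $[t]$. The half-open endpoints only make the sets smaller, so this inclusion is all that is needed and the argument above goes through unchanged.
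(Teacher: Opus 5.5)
Your proof is correct, but it is organized differently from the paper's.

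The paper never goes back to the set definition \eqref{eq:Interval_Modulo_Def}. It works directly on the evaluated form \eqref{eq:IntervalModulo2} and splits into three cases:
\begin{itemize}
\item width at least $\tau$;
\item width below $\tau$ without wrap;
\item width below $\tau$ with wrap, itself split according to which piece contains $t_0$.
\end{itemize}
In each case it exhibits an explicit index ($\ell$ with $t_0+(\ell-1)\tau\leqslant\underline{t}\leqslant t_0+\ell\tau$, $\lfloor\underline{t}/\tau\rfloor$, or $\lfloor\overline{t}/\tau\rfloor$) such that $t_0+\ell\tau\in[t]$.

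You instead observe that under the image definition the statement is a one-line lifting argument: $s\in[t]$ with $s\bmod\tau=t_0$ gives $\ell=\lfloor s/\tau\rfloor$. You then move all the case work into a separate inclusion: the right-hand side of \eqref{eq:IntervalModulo2} is contained in $\{t\bmod\tau\mid t\in[t]\}$. The paper's case analysis is essentially that inclusion and the lifting step fused together.

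Your decomposition makes two things explicit that the paper leaves silent:
\begin{itemize}
\item $\ell\in\mathbb{N}$ depends on the observation times being nonnegative. The paper also uses this, without saying so, when it takes $\lfloor\underline{t}/\tau\rfloor\in\mathbb{N}$ or brackets $\underline{t}$ between $t_0+(\ell-1)\tau$ and $t_0+\ell\tau$ with $\ell\geqslant 0$.
\item The half-open formula \eqref{eq:IntervalModulo2} is a conservative test for \eqref{eq:absence-criterion-verification}. When $w([t])<\tau$ it drops the residue of $\overline{t}$, so it is strictly smaller than the true image.
\end{itemize}
The paper's route, in exchange, keeps everything in the explicit endpoint form $\underline{t}-\ell\tau$, $\overline{t}-\ell\tau$ with floor indices, which is exactly the bookkeeping the Appendix builds on when extending the criterion to intervals of periods.

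One small point to write out in your inclusion argument. When $w([t])<\tau$, the branch condition $\underline{t}\bmod\tau\leqslant\overline{t}\bmod\tau$ is equivalent to $\underline{t}\bmod\tau+\overline{t}-\underline{t}<\tau$. The wrapping branch is equivalent to the reverse inequality, which is what places the next multiple of $\tau$ inside $[t]$. Both of your branch arguments rest on this equivalence, so state it rather than leave it implicit.
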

\begin{proof}
To show that \eqref{eq:absence-criterion-verification} implies \eqref{eq:detectable},
consider $\tau\in\mathcal{T}_{\mathrm{d}}$ with $\tau\neq0$ and
$t_{0}\in\left[0,\tau\right[$. Then \eqref{eq:absence-criterion-verification}
implies that
$\exists\left[t\right]\in\text{\ensuremath{\mathcal{O}}\ensuremath{\left(\boldsymbol{x},k\right)}}$ such that $t_{0}\in\left[t\right]\text{mod}\tau$. Three cases are considered.

First, if $w\left(\left[t\right]\right)\geqslant\tau$, $\left[t\right]\text{mod}\tau=\left[0,\tau\right[$,
so $t_{0}\in\left[0,\tau\right[$. The sequence $u_{\ell}:\ell\mapsto t_{0}+\ell\tau$
is increasing, so $\exists\ell\in\mathbb{N}$ such that
\begin{equation}
t_{0}+\ell\tau\geqslant\underline{t}\geqslant t_{0}+(\ell-1)\tau.\label{eq:proof-2}
\end{equation}
Adding $\tau$ to the second term of \eqref{eq:proof-2}, one gets
$\underline{t}+\tau\geqslant t_{0}+\ell\tau$. As $\overline{t}-\underline{t}\geqslant\tau$,
one has $\overline{t}\geqslant t_{0}+\ell\tau$. Using the first term
of \eqref{eq:proof-2}, one has $\overline{t}\geqslant t_{0}+\ell\tau\geqslant\underline{t}$.
Therefore, $\exists\ell\in\mathbb{N}\text{ such that }t_{0}+\ell\tau\in\left[t\right]$.

Second, if $w\left(\left[t\right]\right)<\tau$ and $\underline{t}\text{mod}\tau\leqslant\overline{t}\text{mod}\tau$,
then $\left[t\right]\text{mod}\tau=\left[\underline{t}\text{mod}\tau,\underline{t}\text{mod}\tau+\overline{t}-\underline{t}\right[$.
As $t_{0}\in\left[t\right]\text{mod}\tau$, $\underline{t}-\underline{\ell}\tau\leqslant t_{0}\leqslant\underline{t}-\underline{\ell}\tau+\overline{t}-\underline{t}$,
with $\underline{\ell}=\left\lfloor \underline{t}/\tau\right\rfloor $.
Thus, $\exists\underline{\ell}\in\mathbb{N}\text{ such that }t_{0}+\underline{\ell}\tau\in\left[t\right]$.

Third, if $w\left(\left[t\right]\right)<\tau$ and $\underline{t}\text{mod}\tau>\overline{t}\text{mod}\tau$
then $\left[t\right]\text{mod}\tau=\left[0,\overline{t}\text{mod}\tau\right[\cup\left[\underline{t}\text{mod}\tau,\tau\right[$.
Two sub-cases are now considered.

First subcase: $t_{0}\in[0,\overline{t}\text{mod}\tau[$
implies $t_{0}\in[0,\overline{t}-\overline{\ell}\tau[$,
with $\overline{\ell}=\lfloor \overline{t}/\tau\rfloor $.
Hence, 
\begin{equation}
\overline{\ell}\tau\leqslant t_{0}+\overline{\ell}\tau\leqslant\overline{t}.\label{eq:Proof1-3-1}
\end{equation}
Moreover, as $\underline{t}\text{mod}\tau>\overline{t}\text{mod}\tau$,
$\exists\underline{\ell}\in\mathbb{N}$ such that $\underline{t}-\underline{\ell}\tau>\overline{t}-\overline{\ell}\tau$.
Hence $\left(\overline{\ell}-\underline{\ell}\right)\tau>\overline{t}-\underline{t}>0$
and $\overline{\ell}>\underline{\ell}$ (because $\tau\geqslant0)$.
This last result gives $\underline{\ell}\tau\leqslant\underline{t}<\overline{\ell}\tau$.
This, with \eqref{eq:Proof1-3-1} yields $\underline{t}\leqslant t_{0}+\overline{\ell}\tau\leqslant\overline{t}$.

Second subcase: $t_{0}\in\left[\underline{t}\text{mod}\tau,\tau\right[$
implies $t_{0}\in\left[\underline{t}-\underline{\ell}\tau,\tau\right[$,
so $\underline{t}\leqslant t_{0}+\underline{\ell}\tau\leqslant\left(\underline{\ell}+1\right)\tau$.
With $\overline{\ell}=\lfloor \overline{t}/\tau\rfloor $
and $\overline{\ell}>\underline{\ell}$, one has $\underline{t}\leqslant t_{0}+\underline{\ell}\tau\leqslant\left(\underline{\ell}+1\right)\tau\leqslant\overline{\ell}\tau\leqslant\overline{t}$.
Thus, in both sub-cases, $\exists\ell\in\mathbb{N}\text{ such that }t_{0}+\ell\tau\in\left[t\right]$.

Finally, in all of the three cases, we have shown that $\exists\ell\in\mathbb{N}\text{ such that }t_{0}+\ell\tau\in\left[t\right]$:
\eqref{eq:detectable} is then verified.
\end{proof}
Proving the absence of sources with emission period in $\mathcal{T}_\mathrm{d}$ at $\boldsymbol{x}$ requires thus verifing \eqref{eq:absence-criterion-verification} for all $\tau\in\mathcal{T}_\mathrm{d}$.
The set $\overline{\mathbb{X}}_{j}(k)$
is then iteratively updated as
\begin{align}
\overline{\mathbb{X}}_{i}(k)=\overline{\mathbb{X}}_{i}(k-1)\cup\{& \boldsymbol{x}\in\mathbb{F}_{i}\left(k\right)\mid\nonumber\\
&\hspace{-1cm}\forall\tau\in\mathcal{T}_{\mathrm{d}},\left[0,\tau\right[=\textstyle\bigcup_{\left[t\right]\in\mathcal{O}_i\left(\boldsymbol{x},k\right)}\left[t\right]\mathrm{mod}\tau\} .\label{eq:Update absence estimate}
\end{align}
This update is only performed for points in $\mathbb{F}_{i}\left(k\right)$
as $\mathcal{O}_i\left(\boldsymbol{x},k\right)$ is only updated for those points using \eqref{eq:observation-intervals-list-update}.
\subsubsection{Proving the absence of sources in subsets of $\mathbb{X}_0$}

\label{subsec:Space-discretization}

If \eqref{eq:absence-criterion-verification} is satisfied for some $\boldsymbol{x}_0\in\mathbb{X}_{0}$ and $\mathcal{O}_i(\boldsymbol{x}_0,k)$, then it also holds true for all $\boldsymbol{x}\in\mathbb{X}_0$ such that $\mathcal{O}_i(\boldsymbol{x},k)=\mathcal{O}_i(\boldsymbol{x}_0,k)$.

To prove the absence of sources in subsets of $\mathbb{X}_0$, $\mathbb{X}_{0}$ is partitioned in boxes $\left[\boldsymbol{x}\right]_{q}$,
$q=1,\dots,N_{\mathrm{c}}$ gathered in a list $\mathcal{X}_{0}$. The lists of observation time intervals are then considered for each box $[\boldsymbol{x}]\in\mathcal{X}_{0}$ and updated as follows. If
$\left[\boldsymbol{x}\right]\subset\mathbb{F}_{i}\left(k\right)\setminus\textstyle\bigcup_{j\in\mathcal{D}_{i}\left(k\right)}\mathbb{X}_{i,j}(k)$,
\begin{equation}
\mathcal{O}_{i}\left(\left[\boldsymbol{x}\right],k\right)=\mathcal{O}_{i}\left(\left[\boldsymbol{x}\right],k-1\right)\cup\left\{ \left[\left(k-1\right)T,kT\right]\right\}. 
\end{equation}
Prop.~\ref{prop:ModDiscret} is easily extended to verify whether \eqref{eq:absence-criterion-verification} is satisfied for all $\boldsymbol{x}\in[\boldsymbol{x}]$ considering $\mathcal{O}_i([\boldsymbol{x}],k)$.
The size of the cells $\left[\boldsymbol{x}\right]\in\mathcal{X}_{0}$
determines the compromise between evaluation complexity and estimation accuracy.
If $\left[\boldsymbol{x}\right]$ is too large, $\mathcal{O}_{i}\left(\left[\boldsymbol{x}\right],k\right)$
may remain empty.


\subsection{Communications and information fusion}

\label{subsec:Communications-1}

To extend the previous framework to a fleet of communicating UAVs, we assume that 
during the time interval $[kT,(k+1)T]$, UAV~$i$ broadcasts $\boldsymbol{x}^{\text{u}}_{i}(k)$,
$\mathcal{K}_{i}(k)$, $\mathcal{X}_{i}(k)$, and $\overline{\mathbb{X}}_{i}(k)$.
The lists $\mathcal{O}_{i}\left(\left[\boldsymbol{x}\right],k\right)$
for all $\left[\boldsymbol{x}\right]\in\mathcal{X}_{0}\setminus\overline{\mathbb{X}}_{i}(k)$ are also transmitted.

After reception of similar information from its neighbors, UAV~$i$ evaluates $\mathcal{K}_{i}(k)=\mathcal{K}_{i}(k)\cup\bigcup_{\ell\in\mathcal{N}_{i}(k)}\mathcal{K}_{\ell}(k)$ and $\overline{\mathbb{X}}_{i}(k)=\overline{\mathbb{X}}_{i}(k)\cup\bigcup_{\ell\in\mathcal{N}_{i}(k)}\overline{\mathbb{X}}_{\ell}(k)$. The set estimates of the source locations for all $j\in\mathcal{K}_{i}(k)$ are updated as
$
\mathbb{X}_{i,j}(k) =\mathbb{X}_{i,j}(k)\cap\underset{\ell\in\mathcal{N}_{i}(k),j\in\mathcal{K}_{\ell}(k)}{\bigcap}\mathbb{X}_{\ell,j}(k).\label{eq:X-update}$

\section{Simulation and results}
\label{sec:Simu-results}




For the numerical simulation, we consider a square search area $\mathbb{X}_{0}$ of $100$~m wide partitioned into
square cells of $2$~m wide in which $10$
sources are randomly placed following a uniform distribution. Each source $j$ emits a periodic signals with period $\tau_{j}$ uniformly distributed in $\left[\tau_{\min},\tau_{\max}\right]=\left[2,16\right]$~s
and initial offset $t_{0,j}$ uniformly distributed in $\left[0,\tau_{j}\right]$.

The initial locations of UAVs are set randomly, following a uniform distribution over $\mathbb{X}_0$. UAVs move at a constant speed and follow a lawn-mower trajectory, see Figure~\ref{fig:Results_0}. 
This enables a fair comparison without control-induced effects. More sophisticated control techniques may be employed as in \cite{Ibenthal2021}. 

\begin{figure}[t!]
\centering
\centerline{\includegraphics[width=8.5cm]{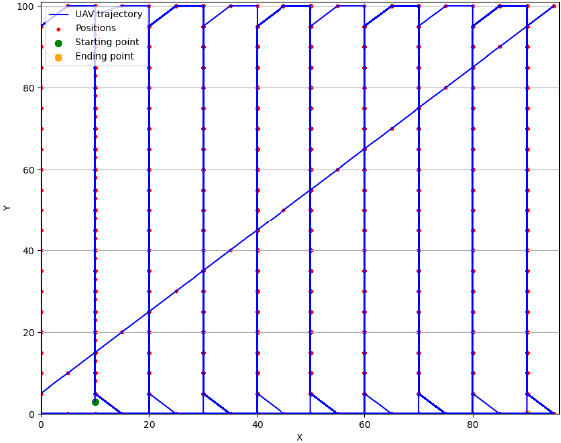}}
\caption{Lawn-mower trajectory followed by UAVs.}
\label{fig:Results_0}
\end{figure}

The FoD $\mathbb{F}_{i}\left(k\right)$
is the intersection of two disks of centers the orthogonal projection of $\boldsymbol{x}^{\mathrm{u}}\left(k-1\right)$ and $\boldsymbol{x}^{\mathrm{u}}\left(k\right)$
on $\mathbb{X}_{0}$ and radius $10$~m. The communication radius
is $r_\mathrm{com}=30$~m, typical of the
Bluetooth Low Energy protocol.
The sampling period for signal acquisition and estimate updates is $T=1$~s.


\subsection{Bayesian baseline algorithm (BBA)}
\label{Ssec:BBA}

An adaptation to several UAVs of the Bayesian estimator introduced in
\cite{Selvaratnam2017} serves as baseline for comparisons. This Bayesian baseline algorithm is denoted BBA. 
During the time interval $\left[\left(k-1\right)T,kT\right]$, the variable $z_{i,j}\left(k\right)=1$ indicates that UAV~$i$ has detected source~$j$ in $\mathbb{F}_{i}\left(k\right)$. If
$z_{i,j}\left(k\right)=0$, the source~$j$ has not been detected.

We assume that the detection probability of a signal emitted by the $j$-th source at $\boldsymbol{x}_j$ is
$\ell_{j}\left(\boldsymbol{x}^{\mathrm{u}}\left(k\right),\boldsymbol{x}_j\right)=
p_{\mathrm{D}}$ if $\boldsymbol{x}_j\in\mathbb{F}_{i}\left(k\right)$ and $\ell_{j}\left(\boldsymbol{x}^{\mathrm{u}}\left(k\right),\boldsymbol{x}_j\right)=
p_{\mathrm{F}}$ if $\boldsymbol{x}_j\notin\mathbb{F}_{i}\left(k\right)$,
where $p_{\mathrm{D}}$ and $p_{\mathrm{F}}$ are the detection and false alarm probabilities. BBA also consider the partition of $\mathbb{X}_{0}$ in the list of boxes $\mathcal{\ensuremath{X}}_{0}$.
For UAV~$i$ located at $\boldsymbol{x}^{\mathrm{u}}_{i}\left(k\right)$, the likelihood of
$z_{i,j}\left(k\right)$ 
for the cell $\left[\boldsymbol{x}\right]$ is taken as the upper-bound of the likelihoods considering all $\boldsymbol{x}\in[\boldsymbol{x}]$
\begin{align}
p\left(z_{i,j}\left(k\right)\mid\boldsymbol{x}^{\mathrm{u}}_{i}\left(k\right),\left[\boldsymbol{x}\right]\right)&=\textstyle\max_{\boldsymbol{x}\in[\boldsymbol{x}]}\ell_{j}\left(\boldsymbol{x}^{\mathrm{u}}_{i}\left(k\right),\boldsymbol{x}\right)^{z_{i,j}\left(k\right)}\nonumber\\
&\left(1-\ell_{j}\left(\boldsymbol{x}^{\mathrm{u}}_{i}\left(k\right),\boldsymbol{x}\right)\right)^{1-z_{i,j}\left(k\right)},\label{eq:observation-proba}
\end{align}
which is easily evaluated by testing whether $[\boldsymbol{x}]$ intersects $\mathbb{F}_{i}\left(k\right)$.

As in \cite{Selvaratnam2017}, UAV~$i$ evaluates $p_{i,j}\left(\left[\boldsymbol{x}\right]\mid z_{i,j}\left(1:k\right),\boldsymbol{x}^{\mathrm{u}}_{i}\left(1:k\right)\right)$, the \emph{a posteriori} probability mass function (pmf) of presence of source~$j$
in a cell $\left[\boldsymbol{x}\right]\in\mathcal{\ensuremath{X}}_{0}$
from the measurements $z_{i,j}\left(1:k\right)$ obtained for the UAV
locations $\boldsymbol{x}^{\mathrm{u}}_{i}\left(1:k\right)$. The number of sources $N^\mathrm{s}$ has to be known and $N^\mathrm{s}$ pmfs are managed in parallel, one for each source, and initialized at $p_{0}={1}/{\left|\mathcal{X}_{0}\right|}$ for all $\left[\boldsymbol{x}\right]\in\mathcal{\ensuremath{X}}_{0}$.

An estimate $\widehat{\boldsymbol{x}}^{\mathrm{t,}\mathrm{B}}_{i,j}\left(k\right)$ of the source location is taken as the mode of $p_{i,j}\left(\left[\boldsymbol{x}\right]\mid z_{i,j}\left(1:k\right),\boldsymbol{x}^{\mathrm{u}}_{i}\left(1:k\right)\right)$ and the estimation uncertainty is the set $\mathbb{X}^{\mathrm{B}}_{i,j}(k)$ of all $\left[\boldsymbol{x}\right]\in\mathcal{\ensuremath{X}}_{0}$ with a pmf larger than some threshold $p^{\mathrm{B}}_{\mathrm{u}}$.   
The set $\overline{\mathbb{X}}^{\mathrm{B}}_{i}(k)$ considered clear of sources in the BBA consists of all $\left[\boldsymbol{x}\right]\in\mathcal{\ensuremath{X}}_{0}$ such that the average over all sources $i$ of $p_{i,j}\left(\left[\boldsymbol{x}\right]\mid z_{i,j}\left(1:k\right),\boldsymbol{x}^{\mathrm{u}}_{i}\left(1:k\right)\right)$ is less than some threshold $p^{\mathrm{B}}_{\mathrm{a}}$.


We choose $p_{\text{F}}=10^{-6}$ and $p_{\mathrm{D}}=1/\tau_\mathrm{avg}$, with $\tau_\mathrm{avg}$ the average of $\tau_{\min}=1~\text{s}$ and $\tau_{\max}=16~\text{s}$, to account the periodicity of emissions. 
The thresholds $p^{\mathrm{B}}_{\mathrm{u}}$ and $p^{\mathrm{B}}_{\mathrm{a}}$ are respectively set to $p^{\mathrm{B}}_{\mathrm{u}}=10^{2}p_{0}$ and $p^{\mathrm{B}}_{\mathrm{a}}=10^{-2}p_{0}$.

Each UAV $i$ updates a probability map $\mathcal{P}_{i,j}\left(k\right)$
for each detected target $j$:
\begin{equation}
\mathcal{P}_{i,j}\left(k\right)=\left\{ p_{i,j}\left(\left[\boldsymbol{x}\right]\mid z_{i,j}\left(1:k\right),\boldsymbol{x}^{\mathrm{u}}_{i}\left(1:k\right)\right)\right\} _{\left[\boldsymbol{x}\right]\in\mathcal{X}_{0}}.\label{eq:probability-map}
\end{equation}
UAVs share their probability maps $\mathcal{P}_{i,j}\left(k\right)$
for all detected targets with their neighbors in list $\mathcal{N}_{i}(k)$.
For the sake of simplicity, let us note $\mathcal{Z}_{i,j}(k)=\left\{ z_{i,j}\left(1:k\right),\boldsymbol{x}^{\mathrm{u}}_{i}\left(1:k\right)\right\} $.
The probability maps of two UAVs $i$ and $\ell$ are the merged using
\begin{equation}
p\left([\boldsymbol{x}]|\mathcal{Z}_{i,j}(k),y_{\ell}\right)=\frac{a}{a+1}\label{eq:fusion-5-1}
\end{equation}
with 
\begin{equation}
a=\frac{p\left([\boldsymbol{x}]|\mathcal{Z}_{i,j}(k)\right)p\left([\boldsymbol{x}]|\mathcal{Z}_{\ell,j}(k)\right)}{\left(1-p\left([\boldsymbol{x}]|\mathcal{Z}_{i,j}(k)\right)\right)\left(1-p\left([\boldsymbol{x}]|\mathcal{Z}_{\ell,j}(k)\right)\right)}\frac{1-p_{0}}{p_{0}},\label{eq:fusion-proba}
\end{equation}
see Appendix \ref{subsec:Appendix---Probability}. This approach assumes
that the observations of UAVs $i$ and $\ell$ are independent, which
may not be always verified in practice.


\subsection{Metrics}

\label{subsec:Metrics}


For UAV~$i$, we consider (\emph{i}) the proportion of detected sources at time $kT$
$d_{i}(k)=\left|\mathcal{K}_{i}(k)\right| / N^{\mathrm{t}},\label{eq:detection rate}$
and (\emph{ii}) once a source has been detected, the average source localization error
\begin{equation}
\varepsilon_{i}\left(k\right)=\frac{1}{\left|\mathcal{K}_{i}(k)\right|}\textstyle\sum_{j\in\mathcal{K}_{i}(k)}\left\Vert \widehat{\boldsymbol{x}}^{\mathrm{t}}_{i,j}\left(k\right)-\boldsymbol{x}^{\mathrm{t}}_{j}\right\Vert _{2},\label{eq:error-definition}
\end{equation}
where $\widehat{\boldsymbol{x}}^{\mathrm{t}}_{i,j}\left(k\right)$
is the estimate of $\boldsymbol{x}^{\mathrm{t}}_{j}$ obtained using SMA (barycenter of the set estimates)
or BBA (mode of the probabilities of presence evaluated for all boxes in $\mathcal{X}_0$),
and (\emph{iii}) the uncertainty on the absence of sources $\Phi^{\mathrm{a}}_{i}(k)$
introduced in \eqref{eq:absence-metric}. 

We evaluate and represent the average of these metrics 
over all UAVs and averages over $10$ independent simulation runs.

\subsection{Results}

\label{subsec:Simulation-results}

Fig.~\ref{fig:Results_0}(a) shows the impact of the speed of UAVs On the proportion of detected targets when 2 UAVs are used. Increasing the exploration speed allows a faster detection of all sources. Fig.~\ref{fig:Results_0}(b) shows the impact of the fleet size when the UAV speed is $1.25\,\text{m/s}$. As expected, the time decreases to localize all targets when the fleet size increases. As the UAV trajectory is the same for SMA and BBA, the curves are the same for both algorithms.

Figs~\ref{fig:Results}(a) and (b) show the evolution of the average source localization error for SMA and BBA, for different UAV speeds with a fleet of 2 UAVs. The SMA provides a reduced error compared to BBA. Reducing the UAV speed reduces the error, as more measurements of the same sources can be acquired during a single sweep.

\begin{figure}[t!]
\begin{minipage}[b]{.49\linewidth}
  \centering
  \centerline{\includegraphics[width=4.5cm]{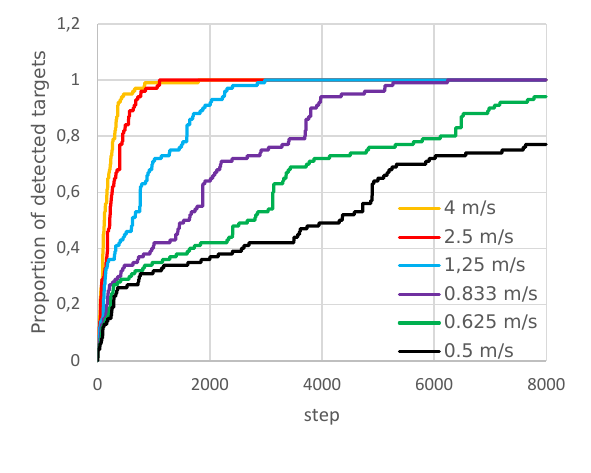}}
\vspace{-0.25cm}
  \centerline{(a)}\medskip
\end{minipage}
\hfill
\begin{minipage}[b]{0.49\linewidth}
  \centering
  \centerline{\includegraphics[width=4.5cm]{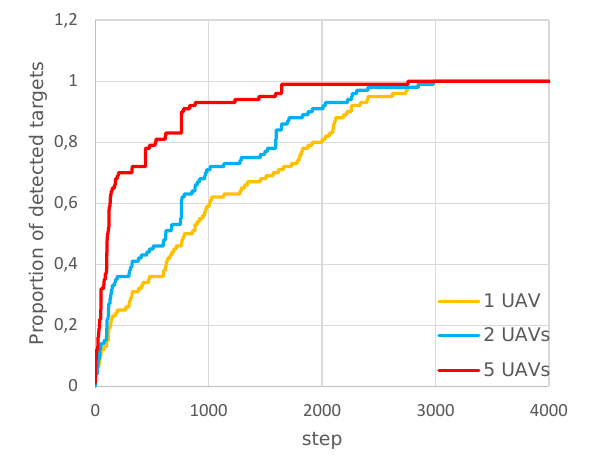}}
\vspace{-0.25cm}
  \centerline{(b)}\medskip
\end{minipage}
\caption{Proportion of detected targets for 2 UAVs with various speeds (a) and for various UAV numbers at a speed of $1.25\,\text{m/s}$ (b).}
\label{fig:Results_0}
\end{figure}

\begin{figure}[t!]

\begin{minipage}[b]{.49\linewidth}
  \centering
  \centerline{\includegraphics[width=4.5cm]{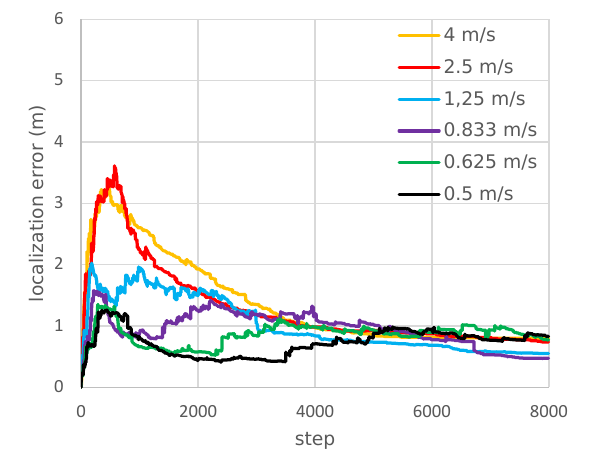}}
\vspace{-0.25cm}
  \centerline{(a)}\medskip
\end{minipage}
\hfill
\begin{minipage}[b]{0.49\linewidth}
  \centering
  \centerline{\includegraphics[width=4.5cm]{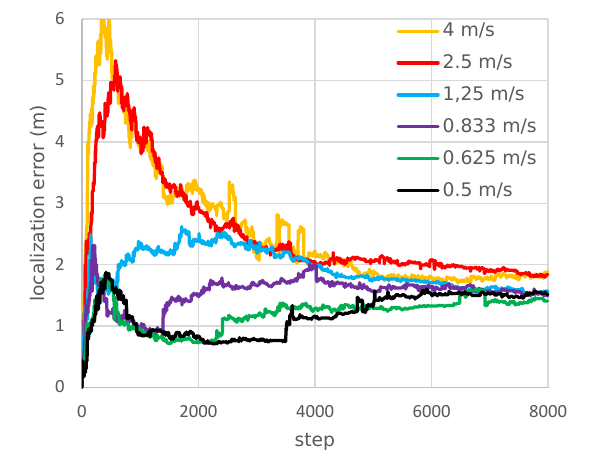}}
\vspace{-0.25cm}
  \centerline{(b)}\medskip
\end{minipage}
\begin{minipage}[b]{.49\linewidth}
  \centering
  \centerline{\includegraphics[width=4.5cm]{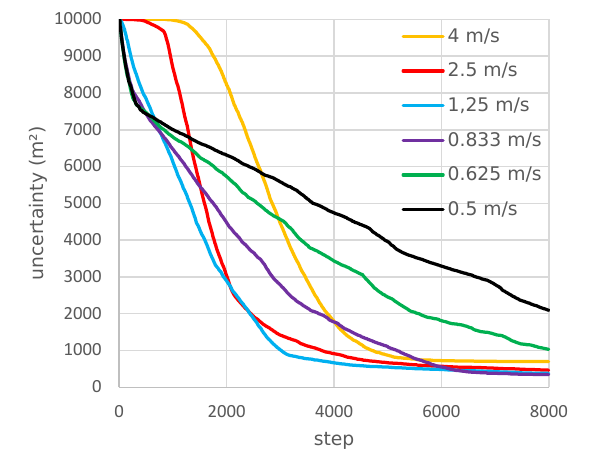}}
\vspace{-0.25cm}
  \centerline{(c)}\medskip
\end{minipage}
\hfill
\begin{minipage}[b]{0.49\linewidth}
  \centering
  \centerline{\includegraphics[width=4.5cm]{{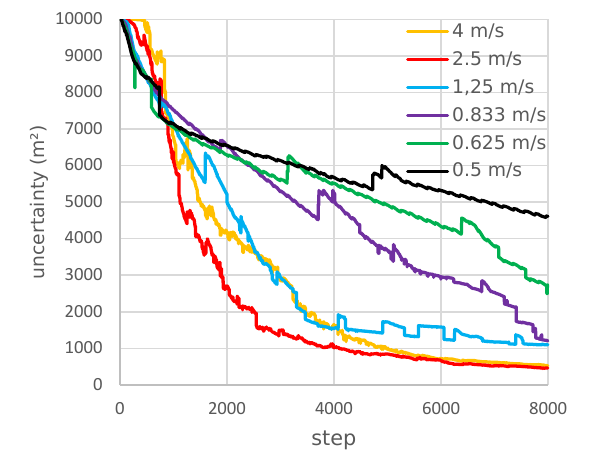}}}
\vspace{-0.25cm}
  \centerline{(d)}\medskip
\end{minipage}
\caption{Average source localization error for various displacement speeds (2 UAVs), SMA (a) and BBA (b); Average size of the set potentially containing sources (2 UAVs), SMA (c) and BBA (d).}
\label{fig:Results}
\end{figure}

\begin{figure}[htpb]

\begin{minipage}[b]{.49\linewidth}
  \centering
  \centerline{\includegraphics[width=4.5cm]{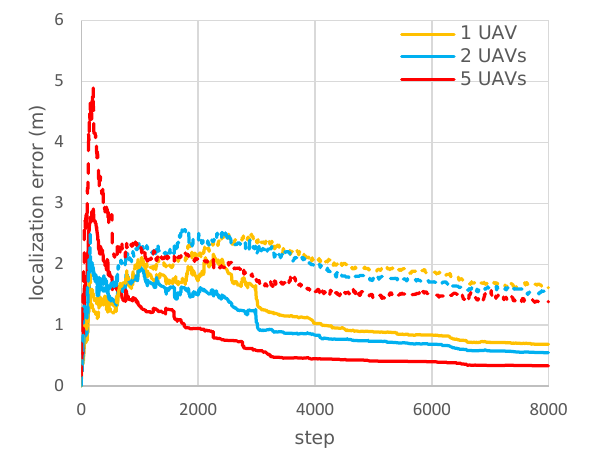}}
\vspace{-0.25cm}
  \centerline{(a)}\medskip
\end{minipage}
\hfill
\begin{minipage}[b]{0.49\linewidth}
  \centering
  \centerline{\includegraphics[width=4.5cm]{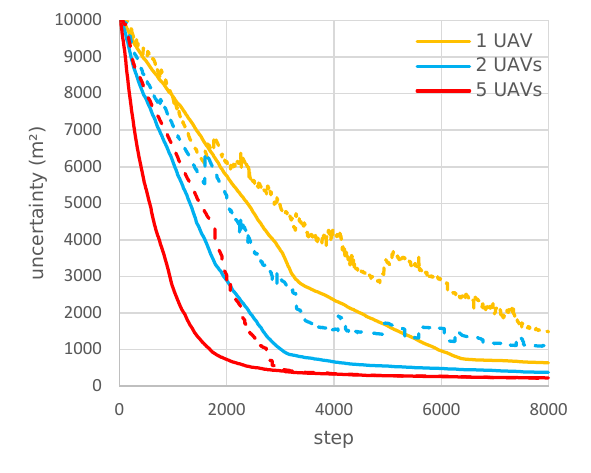}}
\vspace{-0.25cm}
  \centerline{(b)}\medskip
\end{minipage}
\caption{Average error on source localization (a) and Average size of the set potentially containing sources (b) for various fleet sizes evolving at $1.25\,\text{m/s}$, SMA (solid lines) and PBA (dashed lines).}
\label{fig:Results_2}
\end{figure}

Figs~\ref{fig:Results}(c) and (d) show the evolution of the average size of the set potentially containing sources, again for different UAV speeds with a fleet of 2 UAVs for SMA and BBA.
While a displacement at a high speed allows a quick detection and first localization of sources, the best speed to show the absence of targets in large areas are respectively $1.25$ and $2.5\,\text{m/s}$ for the SMA and the BBA. 
The optimal speed results then in a compromise between delay of all source detection, detection accuracy, and ability to show the absence of targets.

Figs~\ref{fig:Results_2} shows the impact of the size of the fleet on the average localization error (a) and on the average size of the set potentially containing sources (b), when the UAV velocity is $1.25\,\text{m/s}$. The simulations confirm that the larger the fleet, the faster all sources are detected and the faster the size of $\mathbb{X}_0 \setminus \overline{\mathbb{X}}_i(k)$ decreases. SMA performs consistently better than BBA.

The evolution of the size of $\overline{\mathbb{X}}^\mathrm{B}_i(k)$ using BBA varies depending on the values of threshold $p^{\mathrm{B}}_{\mathrm{a}}$, but we showed that regardless of the values of this threshold (between $10^{-1}p_{0}$ and $10^{-8}p_{0}$ in the simulation results presented in Figure \ref{fig:uncertainty-thresholds}), SMA performs better than BBA, for the scenario presented in this paper. Indeed, our probabilistic definition of absence uncertainty only quantifies the area for which the probability of containing a target is below an arbitrary threshold \eqref{eq:absence-metric}, set to $p^{\mathrm{B}}_{\mathrm{a}}=p_{0}\cdot10^{-2}=4\cdot10^{-6}$
in the simulations leading to the results presented in this Section
\ref{subsec:Simulation-results}. Consequently, 1) the uncertainty
level highly depends on the chosen threshold (see Figure \ref{fig:uncertainty-thresholds}),
2) this probabilistic uncertainty definition does not have the same
meaning as the deterministic uncertainty, which relies on the certainty
that no target can be located in a certain area (under an assumption
about the maximal emission period).

\begin{figure}[h!]
\centering
\centering \includegraphics[width=6cm]{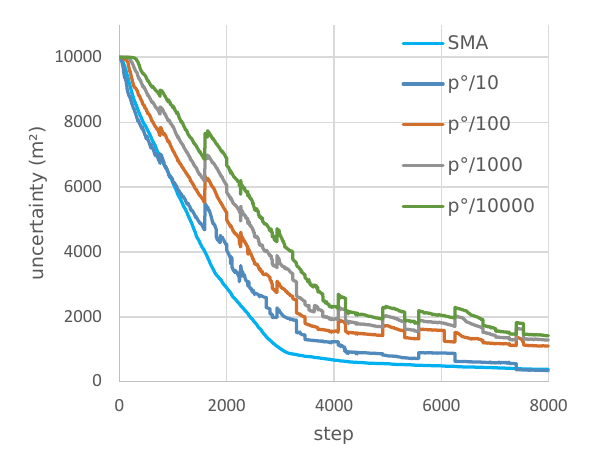}\caption{Average size of the set potentially containing sources
for a fleet of 2 UAVs, comparison between the deterministic algorithm
(SMA) and the Bayesian baseline (PBB) with several values ($10^{-1}p_{0}$,$10^{-2}p_{0}$,
$10^{-3}p_{0}$, $10^{-4}p_{0}$) of threshold $p^{\mathrm{B}}_{\mathrm{a}}$.}
\label{fig:uncertainty-thresholds}
\end{figure}

Videos of the simulation can be found \href{https://centralesupelec-my.sharepoint.com/:f:/g/personal/jacques_bois_centralesupelec_fr/IgCuhnAoY82oQ6oGQEZ7rVxeAS2fYCqGOoajBPiaBqdi_ek?e=EUd63d}{here}.

\section{Conclusion}

\label{sec:Conclusion}

The proposed SMA is able to evaluate set estimates for the location of sources emitting periodically RF signals. It also provides subsets of the search area clear of sources, provided that the hypotheses on the source emission period and the detection probability are satisfied.


Simulations shows that the SMA slightly outperforms
the BBA in terms of localization accuracy and convergence speed. The ability to deterministically localize and certify the absence
of intermittent RF sources has direct applications in search and rescue,
surveillance, electromagnetic intelligence, and connected object monitoring.
By providing mathematically rigorous guarantees, this work offers
a compelling alternative to probabilistic methods in safety-critical
applications.


Among open challenges, one may cite an extension to moving sources with sporadic but non-periodic emissions.

\appendix

\section{Appendix}
\subsection{Extending the criterion of target absence from discrete periods to
period intervals}

\label{subsec:Extending-criterion-verification}

Consider the set of observation time intervals $\mathcal{O}\left(\boldsymbol{x},k\right)$
and assume that for some $\tau\in\mathcal{T}_{\mathrm{d}}$, one has
\begin{equation}
\left[0,\tau\right[=\underset{\left[t\right]\in\mathcal{O}\left(\boldsymbol{x},k\right)}{\bigcup}\left(\left[t\right]\mathrm{mod}\tau\right).\label{eq:criterion}
\end{equation}
Our aim in what follows is to determine an interval $\left[\tau'\right]$
with $\tau\in\left[\tau'\right]$ such that \eqref{eq:criterion}
is satisfied for all $\tau'\in\left[\tau'\right]$. The main idea
is to consider small variations of $\tau$ and to determine conditions
under which \eqref{eq:criterion} is still satisfied. Several cases
depending on $\left|\mathcal{O}\left(\boldsymbol{x},k\right)\right|$
are considered and illustrated in Figure~\ref{fig:Period-verification-extension-scheme}.

\begin{figure}[h!]
\centering
\centering \includegraphics[width=8.5cm]{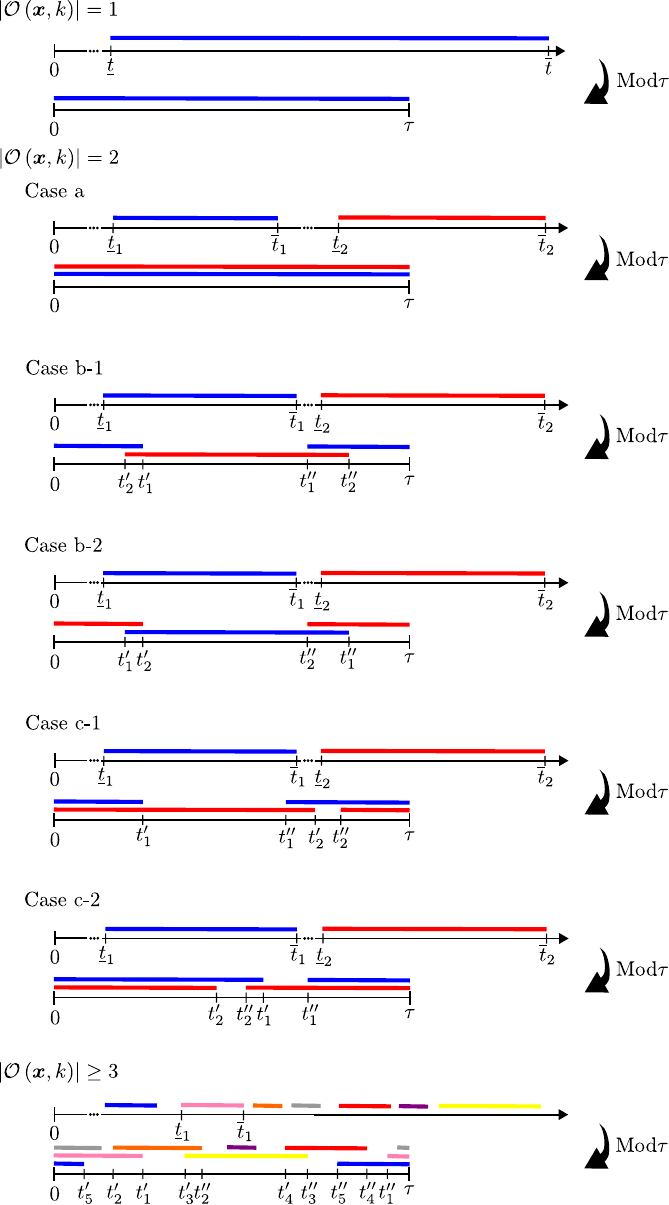}
\caption{Different cases for which $\left[0,\tau\right[=\protect\underset{\left[t\right]\in\mathcal{O}\left(\boldsymbol{x},k\right)}{\bigcup}\left(\left[t\right]\mathrm{mod}\tau\right)$.}
\label{fig:Period-verification-extension-scheme}
\end{figure}

\subsubsection{When $\left|\mathcal{O}\left(\boldsymbol{x},k\right)\right|=1$}

In this case, one has $\mathcal{O}\left(\boldsymbol{x},k\right)=\left\{ \left[t\right]\right\} $.
From \eqref{eq:criterion}, $\left[0,\tau\right[=\left[t\right]\text{mod}\tau$.
With $w\left(\left[t\right]\right)$ the width of interval $\left[t\right]$, for all $\tau'\leqslant w\left(\left[t\right]\right)$,
one has $\left[0,\tau'\right[=\left[t\right]\text{mod}\tau'$. Consequently,
\eqref{eq:criterion} is satisfied for all $\tau'\in\left[0,w\left(\left[t\right]\right)\right]$.
This corresponds to the first case in Figure~\ref{fig:Period-verification-extension-scheme}.

\subsubsection{When $\left|\mathcal{O}\left(\boldsymbol{x},k\right)\right|=2$}

In this case, $\mathcal{O}\left(\boldsymbol{x},k\right)=\left\{ \left[t_{1}\right],\left[t_{2}\right]\right\} $.
Without loss of generality, assume that $\overline{t}_{1}<\underline{t}_{2}$,
\emph{i.e.}, that $\left[t_{2}\right]$ is above $\left[t_{1}\right]$.
\begin{itemize}
\item Case a: If $w\left(\left[t_{1}\right]\right)\geqslant\tau$ or $w\left(\left[t_{2}\right]\right)\geqslant\tau$,
then, a direct application of the result of the case $\left|\text{\ensuremath{\mathcal{O}}}\left(\boldsymbol{x},k\right)\right|=1$
yields that \eqref{eq:criterion} holds true for all $\tau'\in\left[0,\max\left\{ w\left(\left[t_{1}\right]\right),w\left(\left[t_{2}\right]\right)\right\} \right]$.
This corresponds to the second case in Figure~\ref{fig:Period-verification-extension-scheme}.
\item Case b: When $\left[\underline{t}_{1},\overline{t}_{1}\right]\text{mod}\tau$
and $\left[\underline{t}_{2},\overline{t}_{2}\right]\text{mod}\tau$
consist of a different number of intervals, several subcases have
to considered.
\begin{itemize}
\item case b-1: $\left[\underline{t}_{1},\overline{t}_{1}\right] \text{ mod } \tau=\left[0,t_{1}'\right[\cup\left[t_{1}'',\tau\right[$
and $\left[\underline{t}_{2},\overline{t}_{2}\right] \text{ mod } \tau=\left[t_{2}',t_{2}''\right[$,
see the third case in Figure~\ref{fig:Period-verification-extension-scheme}.
Since \eqref{eq:criterion} is satisfied for $\tau,$one has 
\begin{equation}
\begin{array}{c}
t_{1}'\geqslant t_{2}',\\
t_{2}''\geqslant t_{1}''.
\end{array}\label{eq:case-b-1-1}
\end{equation}
Introducing $\ell_{1}=\left\lfloor \overline{t}_{1}/\tau\right\rfloor $,
$\ell_{2}=\left\lfloor \overline{t}_{2}/\tau\right\rfloor $ and
\begin{equation}
\begin{array}{c}
t_{1}'\left(\tau'\right)=\overline{t}_{1}-\ell_{1}\tau',\\
t_{2}'\left(\tau'\right)=\underline{t}_{2}-\ell_{2}\tau',\\
t_{1}"\left(\tau'\right)=\underline{t}_{1}-\left(\ell_{1}-1\right)\tau',\\
t_{2}"\left(\tau'\right)=\overline{t}_{2}-\ell_{2}\tau',
\end{array}\label{eq:case-b-1-3}
\end{equation}
one has $t_{1}'\left(0\right)=t_{1}',\dots,t_{2}"\left(0\right)=t_{2}"$.
Since $\overline{t}_{1}<\underline{t}_{2}$ and $t'_{1}\geqslant t'_{2}$,
one has $\ell_{1}<\ell_{2}$. One has to determine the possible values
of $\tau'$ such that the union of $\left[\underline{t}_{1},\overline{t}_{1}\right]\text{mod}\tau'$
and $\left[\underline{t}_{2},\overline{t}_{2}\right]\text{mod}\tau'$
still yields $\left[0,\tau'\right[$. A sufficient condition is obtained
by considering all $\tau'$ such that
\begin{equation}
\begin{array}{c}
t_{2}'\left(\tau'\right)\leqslant t_{1}'\left(\tau'\right),\\
t_{1}"\left(\tau'\right)\leqslant t_{2}"\left(\tau'\right),\\
t_{2}'\left(\tau'\right)\geqslant0,\\
t_{2}"\left(\tau'\right)\leqslant\tau',
\end{array}\label{eq:case-b-1-4}
\end{equation}
as then, one has 
\begin{align*}
&\left(\left[\underline{t}_{1},\overline{t}_{1}\right]\text{mod}\tau'\right)\cup\left(\left[\underline{t}_{2},\overline{t}_{2}\right]\text{mod}\tau'\right)\\ &=\left[0,t_{1}'\left(\tau'\right)\right[\cup\left[t_{2}'\left(\tau'\right),t_{2}''\left(\tau'\right)\right[\cup\left[t_{1}''\left(\tau'\right),\tau'\right[\label{eq:case-b-1-5}\\
&=\left[0,\tau'\right[.\nonumber 
\end{align*}
Using \eqref{eq:case-b-1-3} in \eqref{eq:case-b-1-4}, one can easily
show that \eqref{eq:criterion} is satisfied for all $\tau'$ in
\begin{equation}
\left[\max\left\{ \frac{\overline{t}_{2}}{\ell_{2}+1},\frac{\underline{t}_{2}-\overline{t}_{1}}{\ell_{2}-\ell_{1}}\right\} ,\min\left\{ \frac{\overline{t}_{2}-\underline{t}_{1}}{\ell_{2}-\ell_{1}+1},\frac{\underline{t}_{2}}{\ell_{2}}\right\} \right].\label{eq:case-b-1-conclusion}
\end{equation}
\item case b-2: $\left[\underline{t}_{1},\overline{t}_{1}\right]\text{mod}\tau=\left[t_{1}',t_{1}''\right[$
and $\left[\underline{t}_{2},\overline{t}_{2}\right]\text{mod}\tau=\left[0,t_{2}'\right[\cup\left[t_{2}'',\tau\right[$,
see the fourth case in Figure~\ref{fig:Period-verification-extension-scheme}.
Since \eqref{eq:criterion} is satisfied for $\tau$, one has 
\begin{equation}
\begin{array}{c}
t_{1}'\leqslant t_{2}'\\
t_{2}''\leqslant t_{1}''
\end{array}.\label{eq:case-b-2-1}
\end{equation}
Consider $\ell_{1}=\left\lfloor \overline{t}_{1}/\tau\right\rfloor $,
$\ell_{2}=\left\lfloor \overline{t}_{2}/\tau\right\rfloor $, and
\begin{equation}
\begin{array}{c}
t_{1}'\left(\tau'\right)=\underline{t}_{1}-\ell_{1}\tau',\\
t_{2}'\left(\tau'\right)=\overline{t}_{2}-\ell_{2}\tau',\\
t_{1}"\left(\tau'\right)=\overline{t}_{1}-\ell_{1}\tau',\\
t_{2}"\left(\tau'\right)=\underline{t}_{2}-\left(\ell_{2}-1\right)\tau'.
\end{array}\label{eq:case-b-2-3}
\end{equation}
Again, one has $t_{1}'\left(0\right)=t_{1}',\dots,t_{2}"\left(0\right)=t_{2}"$.
Now, for all $\tau'$ such that
\begin{equation}
\begin{array}{c}
t_{2}'\left(\tau'\right)\geqslant t_{1}'\left(\tau'\right),\\
t_{1}"\left(\tau'\right)\geqslant t_{2}"\left(\tau'\right),\\
t_{1}'\left(\tau'\right)\geqslant0,\\
t_{1}"\left(\tau'\right)\leqslant\tau',
\end{array}\label{eq:case-b-2-4}
\end{equation}
one has 
\begin{equation}
\left(\left[\underline{t}_{1},\overline{t}_{1}\right]\text{mod}\tau'\right)\cup\left(\left[\underline{t}_{2},\overline{t}_{2}\right]\text{mod}\tau'\right)=\left[0,\tau'\right[.\label{eq:case-b-2-5}
\end{equation}
Since $\overline{t}_{1}<\underline{t}_{2}$ and $t_{1}''\geqslant t_{2}''$,
one has $\ell_{1}+1<\ell_{2}$. Using \eqref{eq:case-b-2-3} in \eqref{eq:case-b-2-4},
one can again easily show that \eqref{eq:criterion} is satisfied
for all $\tau'$ in
\begin{equation}
\left[\max\left(\frac{\overline{t}_{1}}{\ell_{1}+1},\frac{\underline{t}_{2}-\overline{t}_{1}}{\ell_{2}-\ell_{1}-1}\right),\min\left(\frac{\overline{t}_{2}-\underline{t}_{1}}{\ell_{2}-\ell_{1}},\frac{\underline{t}_{1}}{\ell_{1}}\right)\right]
\label{eq:case-b-2-conclusion}
\end{equation}
\end{itemize}
\item case c: When $\left[\underline{t}_{1},\overline{t}_{1}\right]\text{mod}\tau$
and $\left[\underline{t}_{2},\overline{t}_{2}\right]\text{mod}\tau$
both consist of two intervals, one has $\left[\underline{t}_{1},\overline{t}_{1}\right]\text{mod}\tau=\left[0,t_{1}'\right[\cup\left[t_{1}'',\tau\right[$
and $\left[\underline{t}_{2},\overline{t}_{2}\right]\text{mod}\tau=\left[0,t_{2}'\right[\cup\left[t_{2}'',\tau\right[$.
Then, consider $\ell_{1}=\left\lfloor \overline{t}_{1}/\tau\right\rfloor $,
$\ell_{2}=\left\lfloor \overline{t}_{2}/\tau\right\rfloor $ (with
$\ell_{2}>\ell_{1},$else $\left[\underline{t}_{1},\overline{t}_{1}\right]$
and $\left[\underline{t}_{2},\overline{t}_{2}\right]$ would overlap)
and 
\begin{equation}
\begin{array}{c}
t_{1}'\left(\tau'\right)=\overline{t}_{1}-\ell_{1}\tau',\\
t_{2}'\left(\tau'\right)=\overline{t}_{2}-\ell_{2}\tau',\\
t_{1}"\left(\tau'\right)=\underline{t}_{1}-\left(\ell_{1}-1\right)\tau',\\
t_{2}"\left(\tau'\right)=\underline{t}_{2}-\left(\ell_{2}-1\right)\tau',
\end{array}\label{eq:case-c-0-2}
\end{equation}
one has again $t_{1}'\left(0\right)=t_{1}',\dots,t_{2}"\left(0\right)=t_{2}"$.
Two cases have to be considered.
\begin{itemize}
\item case c-1: $t_{2}'>t_{1}''$, see the fifth case in Figure~\ref{fig:Period-verification-extension-scheme}.
Then, for all $\tau'$ such that
\begin{equation}
\begin{array}{c}
t_{2}'\left(\tau'\right)\geqslant t_{1}''\left(\tau'\right),\\
t_{1}'\left(\tau'\right)\geqslant0,\\
t_{2}"\left(\tau'\right)\leqslant\tau',
\end{array}\label{eq:case-c-1-1}
\end{equation}
one has 
\begin{equation}
\left(\left[\underline{t}_{1},\overline{t}_{1}\right]\text{mod}\tau'\right)\cup\left(\left[\underline{t}_{2},\overline{t}_{2}\right]\text{mod}\tau'\right)=\left[0,\tau'\right[.\label{eq:case-c-1-2}
\end{equation}
Introducing \eqref{eq:case-c-0-2} in \eqref{eq:case-c-1-1}, one
easily shows that \eqref{eq:criterion} is satisfied for all 
\begin{equation}
\tau'\in\left[\frac{\underline{t}_{2}}{\ell_{2}},\min\left(\frac{\overline{t}_{1}}{\ell_{1}},\frac{\overline{t}_{2}-\underline{t}_{1}}{\ell_{2}-\ell_{1}+1}\right)\right].\label{eq:case-c-1-conclusion}
\end{equation}
\item case c-2: $t_{1}'>t_{2}''$, see the sixth case in Figure~\ref{fig:Period-verification-extension-scheme}.
Then, for all $\tau'$ such that
\begin{equation}
\begin{array}{c}
t_{1}'\left(\tau'\right)\geqslant t_{2}''\left(\tau'\right),\\
t_{2}'\left(\tau'\right)\geqslant0,\\
t_{1}"\left(\tau'\right)\leqslant\tau',
\end{array}\label{eq:case-c-2-1}
\end{equation}
one has 
\begin{equation}
\left(\left[\underline{t}_{1},\overline{t}_{1}\right]\text{mod}\tau'\right)\cup\left(\left[\underline{t}_{2},\overline{t}_{2}\right]\text{mod}\tau'\right)=\left[0,\tau'\right[.\label{eq:case-c-2-2}
\end{equation}
Introducing \eqref{eq:case-c-0-2} in \eqref{eq:case-c-2-1}, one
shows that \eqref{eq:criterion} is satisfied for all
\begin{equation}
\tau'\in\left[\max\left(\frac{\underline{t}_{1}}{\ell_{1}},\frac{\underline{t}_{2}-\overline{t}_{1}}{\ell_{2}-\ell_{1}-1}\right),\frac{\overline{t}_{2}}{\ell_{2}}\right].\label{eq:case-c-2-conclusion}
\end{equation}
\end{itemize}
\end{itemize}

\subsubsection{When $\left|\mathcal{O}\left(\boldsymbol{x},k\right)\right|\geqslant3$}

In this case, since $\left[0,\tau\right[=\underset{\left[t\right]\in\text{\ensuremath{\mathcal{O}}\ensuremath{\left(\boldsymbol{x},k\right)}}}{\bigcup}\left(\left[t\right]\mathrm{mod}\tau\right)$,
there exists at least one subset $\mathcal{O}'\left(\boldsymbol{x},k\right)$
of $\mathcal{O}\left(\boldsymbol{x},k\right)$ of smallest cardinality
such that 
\begin{equation}
\left[0,\tau\right[=\underset{\left[t\right]\in\mathcal{O}'\left(\boldsymbol{x},k\right)}{\bigcup}\left(\left[t\right]\mathrm{mod}\tau\right)\label{eq:property-subset-O}
\end{equation}
and \eqref{eq:property-subset-O} is not true if a single interval
is removed from $\mathcal{O}'\left(\boldsymbol{x},k\right)$, \textit{i.e.}
$\forall\left[t\right]\in\mathcal{O}'\left(\boldsymbol{x},k\right)$,
\eqref{eq:property-subset-O} is not true for $\mathcal{O}'\left(\boldsymbol{x},k\right)\setminus\left\{ \left[t\right]\right\} $.

Then, one partitions $\mathcal{O}'\left(\boldsymbol{x},k\right)$
as
\begin{equation}
\mathcal{O}'\left(\boldsymbol{x},k\right)=\mathcal{O}'_{1}\left(\boldsymbol{x},k\right)\cup\mathcal{O}'_{2}\left(\boldsymbol{x},k\right)\label{eq:continuous-and-discontinuous}
\end{equation}
where $\mathcal{O}'_{1}\left(\boldsymbol{x},k\right)$ is the list
of intervals $\left[t\right]\in\mathcal{O}'\left(\boldsymbol{x},k\right)$
such that $\left[t\right]\mathrm{mod}\tau$ consists of one interval
and $\mathcal{O}'_{2}\left(\boldsymbol{x},k\right)$ is the list of
intervals $\left[t\right]\in\mathcal{O}'\left(\boldsymbol{x},k\right)$
such that $\left[t\right]\mathrm{mod}\tau$ consists of two intervals.
Figure~\ref{fig:Period-verification-extension-scheme-3} illustrates
a general case with $\left|\mathcal{O}\left(\boldsymbol{x},k\right)\right|\geqslant3$
and \eqref{eq:property-subset-O} satisfied. The gray and purple intervals
are not in $\mathcal{O}'\left(\boldsymbol{x},k\right)$. The orange,
yellow, and red intervals are in $\mathcal{O}'_{1}\left(\boldsymbol{x},k\right)$.
The pink and blue intervals are in $\mathcal{O}'_{2}\left(\boldsymbol{x},k\right)$.

\begin{figure}[h!]
\centering
\centering \includegraphics[width=8.5cm]{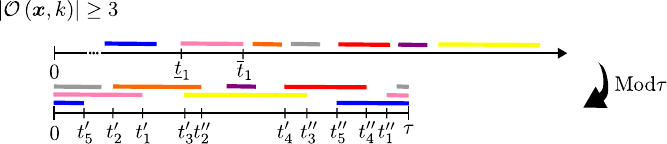}
\caption{General case for $\mathcal{O}\left(\boldsymbol{x},k\right)$ where
$\left[0,\tau\right[=\protect\underset{\left[t\right]\in\mathcal{O}\left(\boldsymbol{x},k\right)}{\bigcup}\left(\left[t\right]\mathrm{mod}\tau\right)$.}
\label{fig:Period-verification-extension-scheme-3}
\end{figure}

\begin{prop}
$\mathcal{O}'_{2}\left(\boldsymbol{x},k\right)$ contains at most
two intervals.
\end{prop}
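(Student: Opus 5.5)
The plan is to argue by contradiction with the minimality of $\mathcal{O}'\left(\boldsymbol{x},k\right)$: if $\mathcal{O}'_{2}\left(\boldsymbol{x},k\right)$ had three or more elements, one of them could be removed without breaking \eqref{eq:property-subset-O}.

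First I fix the structure of the elements of $\mathcal{O}'_{2}\left(\boldsymbol{x},k\right)$. An interval $\left[t\right]$ with $w\left(\left[t\right]\right)\geqslant\tau$ has $\left[t\right]\mathrm{mod}\tau=\left[0,\tau\right[$, which is a single interval. Such an interval therefore belongs to $\mathcal{O}'_{1}\left(\boldsymbol{x},k\right)$. By \eqref{eq:IntervalModulo2}, every $\left[t_{q}\right]\in\mathcal{O}'_{2}\left(\boldsymbol{x},k\right)$ instead satisfies $w\left(\left[t_{q}\right]\right)<\tau$ and
\[
\left[t_{q}\right]\mathrm{mod}\tau=\left[0,a_{q}\right[\cup\left[b_{q},\tau\right[,
\]
with $a_{q}=\overline{t}_{q}\mathrm{mod}\tau$ and $b_{q}=\underline{t}_{q}\mathrm{mod}\tau$. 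Each image is thus a "head" $\left[0,a_{q}\right[$ together with a "tail" $\left[b_{q},\tau\right[$.

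Second, assume $\mathcal{O}'_{2}\left(\boldsymbol{x},k\right)$ is nonempty. I select $q^{\ast}$ maximizing $a_{q}$ and $q^{\ast\ast}$ minimizing $b_{q}$ over $\mathcal{O}'_{2}\left(\boldsymbol{x},k\right)$, breaking ties arbitrarily. These are at most two distinct intervals. For any other $\left[t_{p}\right]\in\mathcal{O}'_{2}\left(\boldsymbol{x},k\right)$ with $p\notin\{q^{\ast},q^{\ast\ast}\}$, we have $\left[0,a_{p}\right[\subseteq\left[0,a_{q^{\ast}}\right[$ and $\left[b_{p},\tau\right[\subseteq\left[b_{q^{\ast\ast}},\tau\right[$. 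Hence
\[
\left[t_{p}\right]\mathrm{mod}\tau\subseteq\left(\left[t_{q^{\ast}}\right]\mathrm{mod}\tau\right)\cup\left(\left[t_{q^{\ast\ast}}\right]\mathrm{mod}\tau\right).
\]

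Third, because of this inclusion, removing $\left[t_{p}\right]$ from $\mathcal{O}'\left(\boldsymbol{x},k\right)$ leaves the union of the images modulo $\tau$ unchanged, so it is still $\left[0,\tau\right[$. This contradicts the defining property of $\mathcal{O}'\left(\boldsymbol{x},k\right)$ that \eqref{eq:property-subset-O} fails whenever a single interval is removed. So no such $p$ exists, and $\mathcal{O}'_{2}\left(\boldsymbol{x},k\right)\subseteq\{\left[t_{q^{\ast}}\right],\left[t_{q^{\ast\ast}}\right]\}$, which has at most two elements. When $q^{\ast}=q^{\ast\ast}$ the same argument gives at most one element.

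The only delicate step is the first one: showing that the two-piece form always has the head anchored at $0$ and the tail anchored at $\tau$. This needs the case $w\geqslant\tau$ to be excluded from $\mathcal{O}'_{2}$ and \eqref{eq:IntervalModulo2} to be applied correctly. Once that form is established, the nesting of heads and tails, and hence the redundancy of any third interval, is immediate.
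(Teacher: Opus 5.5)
Your proof is correct and is the paper's argument: the paper also takes the element of $\mathcal{O}'_{2}\left(\boldsymbol{x},k\right)$ with the largest head endpoint $t'_{\ell}$ and the one with the smallest tail start $t''_{h}$, then uses minimality of $\mathcal{O}'\left(\boldsymbol{x},k\right)$ to exclude every other element. Your version is slightly tighter, because it states explicitly the inclusion $\left[t_{p}\right]\mathrm{mod}\tau\subseteq\left(\left[t_{q^{\ast}}\right]\mathrm{mod}\tau\right)\cup\left(\left[t_{q^{\ast\ast}}\right]\mathrm{mod}\tau\right)$, which the paper leaves implicit in its equality of unions.
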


\begin{proof}
For all $[t_{i}]\in\mathcal{O}'_{2}\left(\boldsymbol{x},k\right)$,
$\left[t_{i}\right]\mathrm{mod}\tau=\left[0,t_{i}'\right[\cup\left[t_{i}'',\tau\right[$.
$\exists\ell\in\left\llbracket 0,\left|\mathcal{O}'_{2}\left(\boldsymbol{x},k\right)\right|\right\rrbracket $
such that 
\begin{equation}
t_{\ell}'=\underset{i\in\left\llbracket 0,\left|\mathcal{O}'_{2}\left(\boldsymbol{x},k\right)\right|\right\rrbracket }{\max}\left\{ t_{i}'\right\} .
\end{equation}
If moreover, 
\begin{equation}
t_{\ell}"=\underset{i\in\left\llbracket 0,\left|\mathcal{O}'_{2}\left(\boldsymbol{x},k\right)\right|\right\rrbracket }{\min}\left\{ t_{i}''\right\} ,
\end{equation}
then 
\begin{equation}
\underset{\left[t\right]\in\mathcal{O}_{2}'\left(\boldsymbol{x},k\right)}{\bigcup}\left(\left[t\right]\mathrm{mod}\tau\right)=\left[t_{\ell}\right]\mathrm{mod}\tau
\end{equation}
and all intervals $\left[t_{i}\right]$ with $i\neq\ell$ cannot belong
to $\mathcal{O}'_{2}\left(\boldsymbol{x},k\right)\subset\mathcal{O}'\left(\boldsymbol{x},k\right)$
since $\mathcal{O}'\left(\boldsymbol{x},k\right)$ is a smallest subset
of $\mathcal{O}\left(\boldsymbol{x},k\right)$ for which \eqref{eq:property-subset-O}
is satisfied.

Now assume that $\exists h\in\left\llbracket 0,\left|\mathcal{O}'_{2}\left(\boldsymbol{x},k\right)\right|\right\rrbracket $,
$h\neq\ell$ such that
\begin{equation}
t_{h}"=\underset{i\in\left\llbracket 0,\left|\mathcal{O}'_{2}\left(\boldsymbol{x},k\right)\right|\right\rrbracket }{\min}\left\{ t_{i}''\right\} ,
\end{equation}
then 
\begin{equation}
\underset{\left[t\right]\in\mathcal{O}_{2}'\left(\boldsymbol{x},k\right)}{\bigcup}\left(\left[t\right]\mathrm{mod}\tau\right)=\left[t_{\ell}\right]\mathrm{mod}\tau\cup\left[t_{h}\right]\mathrm{mod}\tau
\end{equation}
and all intervals $\left[t_{i}\right]$ with $i\neq\ell$ or $i\neq h$
cannot belong to $\mathcal{O}'_{2}\left(\boldsymbol{x},k\right)$.
\end{proof}

In what follows, we consider the case $\left|\mathcal{O}'_{2}\left(\boldsymbol{x},k\right)\right|=2$.
Results are easily extended to the case $\left|\mathcal{O}'_{2}\left(\boldsymbol{x},k\right)\right|=1$.
Let $n'=\left|\text{\ensuremath{\mathcal{O}}'}\left(\boldsymbol{x},k\right)\right|$
and the following indexing of intervals in $\mathcal{O}_{1}'\left(\boldsymbol{x},k\right)$
and $\mathcal{O}_{2}'\left(\boldsymbol{x},k\right)$
\begin{align}
\mathcal{O}'_{1}\left(\boldsymbol{x},k\right) & =\left\{ \left[t_{2}\right],\dots,\left[t_{n'-1}\right]\right\} \\
\mathcal{O}'_{2}\left(\boldsymbol{x},k\right) & =\left\{ \left[t_{1}\right],\left[t_{n'}\right]\right\} 
\end{align}
such that 
\begin{align}
t_{2}' & <\dots<t_{n'-1}'\\
t_{1}' & =\max\left\{ t_{1}',t_{n'}'\right\} 
\end{align}
and
\begin{equation}
t_{n'}"=\min\left\{ t_{1}",t_{n'}"\right\} .
\end{equation}
An example of this indexing is provided in Figure~\ref{fig:Period-verification-extension-scheme-3}.

Consider $\ell_{i}=\left\lfloor \overline{t}_{i}/\tau\right\rfloor $,
$i=1,\dots,n'$ and 
\begin{equation}
\begin{array}{c}
t_{1}'\left(\tau'\right)=\overline{t}_{1}-\ell_{1}\tau',\\
t_{i}'\left(\tau'\right)=\underline{t}_{i}-\ell_{i}\tau',\text{ }i=2,\dots,n'-1\\
t_{n'}'\left(\tau'\right)=\overline{t}_{n'}-\ell_{n'}\tau',\\
t_{1}"\left(\tau'\right)=\underline{t}_{1}-\left(\ell_{1}-1\right)\tau',\\
t_{i}"\left(\tau'\right)=\overline{t}_{i}-\ell_{i}\tau',\text{ }i=2,\dots,n'-1,\\
t_{n'}"\left(\tau'\right)=\underline{t}_{n'}-\left(\ell_{n'}-1\right)\tau'.
\end{array}\label{eq:general-case-1}
\end{equation}
For all $\tau'$ such that
\begin{equation}
\begin{array}{c}
t_{1}'\left(\tau'\right)\geqslant t_{2}'\left(\tau'\right)\\
t_{1}''\left(\tau'\right)\leqslant\tau'\\
t_{i}''\left(\tau'\right)\geqslant t_{i+1}'\left(\tau'\right),\text{ }i=2,\dots,n'-2\\
t_{n'-1}''\left(\tau'\right)\geqslant t_{n'}''\tau'\\
t_{n'}'\left(\tau'\right)\geqslant0
\end{array}\label{eq:extension-general-case-1}
\end{equation}
one has 
\begin{equation}
\left(\left[t_{1}\right]\text{mod}\tau'\right)\cup\dots\cup\left(\left[t_{n'}\right]\text{mod}\tau'\right)=\left[0,\tau'\right[.\label{eq:case-c-1-2-1}
\end{equation}

Combining \eqref{eq:extension-general-case-1} and \eqref{eq:general-case-1},
one gets
\begin{equation}
\begin{array}{c}
\overline{t}_{1}-\underline{t}_{2}\geqslant\left(\ell_{1}-\ell_{2}\right)\tau'\\
\underline{t}_{1}\leqslant\ell_{1}\tau'\\
\overline{t}_{i}-\underline{t}_{i+1}\geqslant\left(\ell_{i}-\ell_{i+1}\right)\tau',\text{ }i=2,\dots,n'-2\\
\overline{t}_{n'-1}-\underline{t}_{n'}\geqslant\left(\ell_{n'-1}-\ell_{n'}+1\right)\tau'\\
\overline{t}_{n'}\geqslant\ell_{n'}\tau'.
\end{array}\label{eq:general-case-2}
\end{equation}

The conditions to be satisfied by $\tau'=\tau'$ to ensure that \eqref{eq:case-c-1-2-1}
holds true depend on the signs of $\ell_{i}-\ell_{i+1}$, $i=1,\dots,n'-2$
and of the sign of $\ell_{n'-1}-\ell_{n'}+1$. Consider the sets of
indices
\[
\mathcal{I}^{-}=\left\{ i=1,\dots,n'-2\mid\ell_{i}<\ell_{i+1}\right\}
\]
and
\[
\mathcal{I}^{+}=\left\{ i=1,\dots,n'-2\mid\ell_{i}>\ell_{i+1}\right\}.
\]
Then, if $\ell_{n'-1}<\ell_{n'}-1$,
\begin{align}
\tau'&\in\left[\max\left(\frac{\underline{t}_{1}}{\ell_{1}},\underset{i\in\mathcal{I}^{-}}{\max}\frac{\overline{t}_{i}-\underline{t}_{i+1}}{\ell_{i}-\ell_{i+1}},\frac{\overline{t}_{n'-1}-\underline{t}_{n'}}{\ell_{n'-1}-\ell_{n'}+1}\right),\right.\nonumber\\
&\hspace{2cm}\left.\min\left(\frac{\overline{t}_{n'}}{\ell_{n'}},\min_{i\in\mathcal{I}^{+}}\frac{\overline{t}_{i}-\underline{t}_{i+1}}{\ell_{i}-\ell_{i+1}}\right)\right],\label{eq:generic-solution}
\end{align}
else, if $\ell_{n'-1}>\ell_{n'}-1$,
\begin{align}
\tau'&\in\left[\max\left(\frac{\underline{t}_{1}}{\ell_{1}},\underset{i\in\mathcal{I}^{-}}{\max}\frac{\overline{t}_{i}-\underline{t}_{i+1}}{\ell_{i}-\ell_{i+1}}\right),\right.\nonumber\\
&\left.\min\left(\frac{\overline{t}_{n'}}{\ell_{n'}},\min_{i\in\mathcal{I}^{+}}\frac{\overline{t}_{i}-\underline{t}_{i+1}}{\ell_{i}-\ell_{i+1}},\frac{\overline{t}_{n'-1}-\underline{t}_{n'}}{\ell_{n'-1}-\ell_{n'}+1}\right)\right].\label{eq:generic-solution-2}
\end{align}

\subsection{Fusion of probability maps between UAVs}

\label{subsec:Appendix---Probability}

With the notations introduced in Section \ref{Ssec:BBA},
this section shows how to update the probability maps \eqref{eq:probability-map}
exchanged between two UAVs. For the sake of simplicity, the following
notations are introduced: $p\left(\left[\boldsymbol{x}\right]\right)$
is the probability that a target is located in cell $\left[\boldsymbol{x}\right]$
and $p\left(\textrm{Ø}\right)=1-p\left(\left[\boldsymbol{x}\right]\right)$
is the probability that the cell $\left[\boldsymbol{x}\right]$ is
empty of targets. Assuming that the observations and trajectories
of UAVs $i$ and $\ell$ are independent,
\begin{equation}
p\left(\mathcal{Z}_{i,j}(k),\mathcal{Z}_{\ell,j}(k)|\left[\boldsymbol{x}\right]\right)=p\left(\mathcal{Z}_{i,j}(k)|\left[\boldsymbol{x}\right]\right)p\left(\mathcal{Z}_{\ell,j}(k)|\left[\boldsymbol{x}\right]\right).\label{eq:fusion-1}
\end{equation}
$p\left(\left[\boldsymbol{x}\right]|\mathcal{Z}_{i,j}(k)\right)$
and $p\left(\left[\boldsymbol{x}\right]|\mathcal{Z}_{\ell,j}(k)\right)$
are known. One has
\begin{align}
p\left(\left[\boldsymbol{x}\right]|\mathcal{Z}_{i,j}(k),\mathcal{Z}_{\ell,j}(k)\right) & =\frac{p\left(\mathcal{Z}_{i,j}(k),\mathcal{Z}_{\ell,j}(k)\mid\left[\boldsymbol{x}\right]\right)}{p\left(\mathcal{Z}_{i,j}(k),\mathcal{Z}_{\ell,j}(k)\right)}p\left(\left[\boldsymbol{x}\right]\right)\label{eq:fusion-2}\\
 & =\frac{p\left(\mathcal{Z}_{i,j}(k)|\left[\boldsymbol{x}\right]\right)p\left(\mathcal{Z}_{\ell,j}(k)|\left[\boldsymbol{x}\right]\right)}{p\left(\mathcal{Z}_{i,j}(k),\mathcal{Z}_{\ell,j}(k)\right)}p\left(\left[\boldsymbol{x}\right]\right)\nonumber 
\end{align}
and
\begin{align}
a&=\frac{p\left(\left[\boldsymbol{x}\right]|\mathcal{Z}_{i,j}(k),\mathcal{Z}_{\ell,j}(k)\right)}{1-p\left(\left[\boldsymbol{x}\right]|\mathcal{Z}_{i,j}(k),\mathcal{Z}_{\ell,j}(k)\right)}\\
& =\frac{p\left(\mathcal{Z}_{i,j}(k)|\left[\boldsymbol{x}\right]\right)p\left(\mathcal{Z}_{\ell,j}(k)|\left[\boldsymbol{x}\right]\right)}{p\left(\mathcal{Z}_{i,j}(k)|\textrm{Ø}\right)p\left(\mathcal{Z}_{\ell,j}(k)|\textrm{Ø}\right)}\frac{p\left(\left[\boldsymbol{x}\right]\right)}{p\left(\textrm{Ø}\right)}\label{eq:fusion-3}\\
 & =\frac{p\left(\left[\boldsymbol{x}\right]\mid\mathcal{Z}_{i,j}(k)\right)p\left(\left[\boldsymbol{x}\right]\mid\mathcal{Z}_{\ell,j}(k)\right)}{p\left(\textrm{Ø}\mid\mathcal{Z}_{i,j}(k)\right)p\left(\textrm{Ø}\mid\mathcal{Z}_{\ell,j}(k)\right)}\frac{p\left(\textrm{Ø}\right)}{p\left(\left[\boldsymbol{x}\right]\right)}\\
 & =\frac{p\left([\boldsymbol{x}]|\mathcal{Z}_{i,j}(k)\right)p\left([\boldsymbol{x}]|\mathcal{Z}_{\ell,j}(k)\right)}{\left(1-p\left([\boldsymbol{x}]|\mathcal{Z}_{i,j}(k)\right)\right)\left(1-p\left([\boldsymbol{x}]|\mathcal{Z}_{\ell,j}(k)\right)\right)}\frac{1-p_{0}}{p_{0}}
\end{align}
thus 
\begin{equation}
p\left(\left[\boldsymbol{x}\right]=1|\mathcal{Z}_{i,j}(k),\mathcal{Z}_{\ell,j}(k)\right)=\frac{a}{a+1}.\label{eq:fusion-5}
\end{equation}

\end{document}